\newcommand{\ignore}[1]{ }
\newcounter{rem}
\providecommand{\customgenericname}{}
\newcommand{\newcustomtheorem}[2]{%
  \newenvironment{#1}[1]
  {%
   \renewcommand\customgenericname{#2}%
   \renewcommand\theinnercustomgeneric{##1}%
   \innercustomgeneric
  }
  {\endinnercustomgeneric}
}
\def\qed{\hbox{\rlap{$\sqcap$}$\sqcup$}}
\def\calP{\mathcal{P}}
\def\calF{\mathcal{F}}
\def\calS{\mathcal{S}}
\newenvironment{proof}{\par\noindent{\bf Proof:}}{\mbox{}\hfill$\qed$\\}
\begin{document}

\title{Computing a rectilinear shortest path amid splinegons in plane}
\author{
Tameem Choudhury\inst{1}
\and
R. Inkulu\inst{1}\thanks{This research is supported in part by NBHM grant 248(17)2014-R\&D-II/1049}
}

\institute{
Department of Computer Science \& Engineering\\
IIT Guwahati, India\\
\email{\{ctameem,rinkulu\}@iitg.ac.in}
}

\maketitle

\pagenumbering{arabic}
\setcounter{page}{1}

\begin{abstract}
We reduce the problem of computing a rectilinear shortest path between two given points $s$ and $t$ in the splinegonal domain $\calS$ to the problem of computing a rectilinear shortest path between two points in the polygonal domain.
As part of this, we define a polygonal domain $\calP$ from $\calS$ and transform a rectilinear shortest path computed in $\calP$ to a path between $s$ and $t$ amid splinegon obstacles in $\calS$.
When $\calS$ comprises of $h$ pairwise disjoint splinegons with a total of $n$ vertices, excluding the time to compute a rectilinear shortest path amid polygons in $\calP$, our reduction algorithm takes $O(n + h \lg{n})$ time.
For the special case of $\calS$ comprising of concave-in splinegons, we have devised another algorithm in which the reduction procedure does not rely on the structures used in the algorithm to compute a rectilinear shortest path in polygonal domain. 
As part of these, we have characterized few of the properties of rectilinear shortest paths amid splinegons which could be of independent interest.
\end{abstract}

\section{Introduction}
\label{sect:intro}

Computing obstacle avoiding shortest path between two points is both fundamental and well-known in computational geometry. 
The case on polygonal obstacles has been well studied (Ghosh and Mount~\cite{journals/siamcomp/GhoshM91}, Kapoor and Maheshwari~\cite{journals/siamcomp/KapoorM00}, Hershberger and Suri~\cite{journals/siamcomp/HershbergerS99}, Kapoor and Maheshwari~\cite{conf/socg/Kapoor88}, Kapoor et al.~\cite{journals/dcg/KapoorMM97}, Mitchell~\cite{journals/ijcga/Mitchell96}, Rohnert~\cite{journals/ipl/Rohnert86}, Storer and Reif~\cite{journals/jacm/StorerR94}, and Inkulu et al.~\cite{journals/corr/InkuluKM10}).
Algorithms for visibility on which many Euclidean shortest path algorithms rely are detailed in Ghosh~\cite{books/visalgo/skghosh2007}.
Clarkson et al.~\cite{conf/socg/ClarksonKV87}, Inkulu and Kapoor~\cite{journals/comgeo/InkuluK09a} and Chen et al.~\cite{journals/jocg/ChenIW16} devised algorithms to compute a rectilinear shortest path amid polygonal obstacles.
In this paper, we devise an algorithm to compute a rectilinear shortest path amid planar curved obstacles.
In specific, as in Dobkin and Souvaine~\cite{journals/algorithmica/DobkinS90}, Dobkin et al.~\cite{journals/algorithmica/DobkinSW88} and Melissaratos and Souvaine~\cite{journals/siamcomp/MelissaratosS92}, we use splinegons to model planar curved objects.
Chen and Wang~\cite{conf/socg/ChenW13} and Hershberger et al.~\cite{conf/socg/HershbergerSY13} devised algorithms to compute a Euclidean shortest path amid curved obstacles in plane.
To our knowledge, this is the first work to compute an optimal rectilinear shortest path amid splinegons in plane. 
Since splinegons model the real-world obstacles more closely than the simple polygons, all the applications of computing shortest paths extend to computing shortest paths amid splinegons as well.

We first introduce terminology from \cite{journals/algorithmica/DobkinS90,journals/algorithmica/DobkinSW88,journals/siamcomp/MelissaratosS92}.
A {(simple) \it splinegon} $S$ is a simple region formed by replacing each edge $e_i$ of a simple polygon $P$ by a curved edge $s_i$ joining the endpoints of $e_i$ such that the region $S$-$seg_i$ bounded by the curve $s_i$ and the line segment $e_i$ is convex~\cite{journals/algorithmica/DobkinSW88}.
The new edge need not be smooth; a sufficient condition is that there exists a left-hand and a right-hand derivative at each point on the splinegon.
The vertices of $S$ are the vertices of $P$.
The polygon $P$ is called the {\it carrier polygon} of the splinegon $S$.
If $S$-$seg_i \subseteq S$, then we say that the edge $e_i$ is {\it concave-in}.
Otherwise, we say that the edge $e_i$ is {\it concave-out}.
We call a splinegon {\it concave-in splinegon} whenever each is of its edges is concave-in.

We assume that the combinatorial complexity of each splinegon edge is $O(1)$, and each of the primitive operations on a splinegon edge can be performed in $O(1)$ time. 
These operations include computing the points of intersections of a splinegon edge with a line, computing the tangents (if any) between two given splinegon edges, finding the tangents between a point and a splinegon edge, computing the distance between two points along a splinegon edge, and finding a point on splinegon edge that has a horizontal or vertical tangent to that splinegon edge at that point.
We assume that no two carrier polygons intersect, and the carrier polygon of any splinegon $S$ does not intersect with another splinegon $S' \in \calS$.

The input {\it splinegonal domain} $\cal{S}$ comprises of $h$ pairwise disjoint splinegons in $\mathbb{R}^2$.
We let $n$ be the number of vertices that together define all the splinegons in $\cal{S}$.
The {\it free space $\cal{F(S)}$} of a splinegonal domain $\calS$ is defined as the closure of $\mathbb{R}^2$ excluding the union of the interior of splinegon obstacles in $\calS$.
Given a splinegonal domain $\calS$ and two given points $s, t \in \cal{F(S)}$, our algorithm computes a shortest path in rectilinear metric, termed {\it rectilinear shortest path}, between $s$ and $t$ that lie in $\cal{F(S)}$.
We reduce the problem of computing a rectilinear shortest path between $s$ and $t$ amid splinegon obstacles in $\calS$ to the problem of computing a rectilinear shortest path between two points amid obstacles in a polygonal domain $\calP$, where $\calP$ is computed from $\calS$.
We assume points $s$ and $t$ are exterior to carrier polygons of splinegon obstacles in $\calS$.
In specific, we prove that this path is a shortest one with respect to rectilinear metric in $\cal{F(S)}$ between $s$ and $t$. 
Analogous to $\cal{F(S)}$, the {\it free space $\cal{F(P)}$} of a polygonal domain $\calP$ is defined as the closure of $\mathbb{R}^2$ excluding the union of the interior of polygonal obstacles in $\calP$.

Chen and Wang~\cite{journals/talg/ChenW15} extended the corridor structures defined for polygonal domains in \cite{conf/socg/Kapoor88,journals/dcg/KapoorMM97} to splinegonal domains.
In computing corridors, this result used bounded degree decomposition of $\cal{F(S)}$ which is analogous to triangulation of the free space of polygonal domain \cite{journals/ijcga/Bar-YehudaC94}. 
Similar to \cite{journals/comgeo/InkuluK09a}, using the corridor decomposition of $\cal{F(S)}$, we characterize rectilinear shortest paths in splinegonal domains. 
These properties facilitate in computing a connected undirected graph $G_{\calS}$ that contains a rectilinear shortest path between $s$ and $t$.
We compute a set $\calP$ of pairwise disjoint simple polygons in $\mathbb{R}^2$ such that each polygon $P \in \calP$ corresponds to a unique splinegon $S \in \calS$ and the vertices of $P$ lie on the boundary of $S$ while ensuring $s, t \in \cal{F(P)}$.
\cite{journals/comgeo/InkuluK09a} gave a constructive proof showing that there exists a connected undirected graph $G_{\calP}$ that contains a rectilinear shortest path between $s$ and $t$ amid $\calP$. 
In computing $\calP$, we ensure the graph $G_{\calS}$ is precisely same as $G_{\calP}$.
We use the algorithm given in~\cite{journals/comgeo/InkuluK09a} to compute a rectilinear shortest path between $s$ and $t$ in $\cal{F(P)}$, and modify the same to a rectilinear shortest path between $s$ and $t$ in $\cal{F(S)}$.
Hence, the reduction of the problem of computing a rectilinear shortest path between $s$ and $t$ amid splinegon obstacles in $\calS$ to the problem of computing a rectilinear shortest path between two points located in $\cal{F(P)}$ amid polygonal obstacles in $\calP$.
Our reduction procedure takes $O(n + h \lg{n})$ time while excluding the time to compute a rectilinear shortest path amid polygonal obstacles in $\calP$.

Since this reduction assumes that a rectilinear shortest path in $\cal{F(P)}$ is computed using the algorithm from \cite{journals/comgeo/InkuluK09a}, we devise another algorithm which works independent of the structures used in the algorithm to compute a rectilinear shortest path in polygonal domain. 
In specific, this reduction algorithm works whenever the shortest path computed in the polygonal domain is polygonal.
However, this algorithm is applicable when $\calS$ comprises of concave-in splinegon obstacles.
The reduction in this algorithm takes $O(n+(h+k)\lg{n}+(h+k+k')\lg{h+k})$ time, again while excluding the time to compute a rectilinear shortest path amid polygonal obstacles in $\calP$.
Let $R$ be the polygonal rectilinear shortest path between $s$ and $t$ amid polygonal obstacles in $\calP$ that was output by the algorithm used.
Then $k$ is the number of line segments in $P$ and $k'$ is the number of points of intersections of that path with splinegons in $\calS$.

We call the vertices of graph as nodes and the vertices of polygonal/splinegonal domain as vertices.
The polygon corresponding to a splinegon $S$ which is computed in the reduction procedure is denoted with $P_S$.
Note that $P_S$ not necessarily be same as the carrier polygon of $S$. 
For any splinegon $S$ (resp. polygon $P$), the boundary of $S$ (resp. $P$) is denoted with $bd(S)$ (resp. $bd(P)$).
A shortest path between $s$ and $t$ amid splinegons (resp. polygons) in $\calS$ (resp. $\calP$) is denoted with $SP_\calS(s, t)$ (resp. $SP_\calP(s, t)$).
The rectilinear distance between $s$ and $t$ amid splinegons in $\calS$ (resp. $\calP$) is denoted with $dist_\calS(s, t)$ (resp. $dist_\calP(s, t)$).
Unless specified otherwise, a shortest path is shortest with respect to rectilinear metric and the distance is the shortest distance in rectilinear metric.

Section~\ref{sect:corrhourglasses} describes the corridor and hourglass structures in the decomposition of $\cal{F(S)}$ from \cite{journals/talg/ChenW15}.
Section~\ref{sect:staircasestr} extends the staircase structures for polygonal domains from \cite{journals/comgeo/InkuluK09a} to splinegonal domain.
Appendix provides detailed proofs of these characterizations. 
The reduction algorithms are detailed in Section~\ref{sect:redalgo}.
Algorithm for the case of concave-in splinegon obstacles is given in Subsection~\ref{subsect:algoconcavein} and the algorithm for the case of arbitrary splinegon obstacles is given in Subsection~\ref{subsect:algoslinegon}.
Conclusions are in Section~\ref{sect:conclu}.

\section{Corridors and hourglasses in splinegonal domain}
\label{sect:corrhourglasses}

In the following, for convenience, we assume that no splinegon in $\calS$ has an edge that is parallel to either of the coordinate axes. 
We first detail the corridor structure that results from applying the algorithm from \cite{journals/talg/ChenW15}.
In the context of polygonal domains, these structures were first described in Kapoor et al.,~\cite{conf/socg/Kapoor88,journals/dcg/KapoorMM97}. 
Chen and Wang in \cite{journals/talg/ChenW15} extended them to splinegonal domain.

In decomposing $\cal{F(S)}$ into corridors, points $s$ and $t$ are considered as two special splinegons in $\calS$.
\cite{journals/talg/ChenW15} first decomposes $\cal{F(S)}$ into $O(n)$ bounded degree regions by introducing $O(n)$ non-intersecting diagonals.
This decomposition of $\cal{F(S)}$ is termed {\it bounded degree decomposition}, denoted with $BDD(\cal{F(S)})$.
In $BDD(\cal{F(S)})$, two regions are neighboring if they share a diagonal on their boundaries.
Each such region has at most four sides and each side is either a diagonal or (part of) a splinegon edge and has at most three neighboring regions.
In addition to the splinegon vertices, the endpoints of the diagonals of $BDD(\cal{F(S)})$ are also treated as the vertices of $BDD(\cal{F(S)})$. 
Let $G(\cal{F(S)})$ denote the planar dual graph of $BDD(\cal{F(S)})$.
Since each region in $BDD(\cal{F(S)})$ has at most three neighbors, $G(\cal{F(S)})$ is a planar graph whose vertex degrees are at most three.

\begin{wrapfigure}{r}{3cm}
\includegraphics[width=3cm]{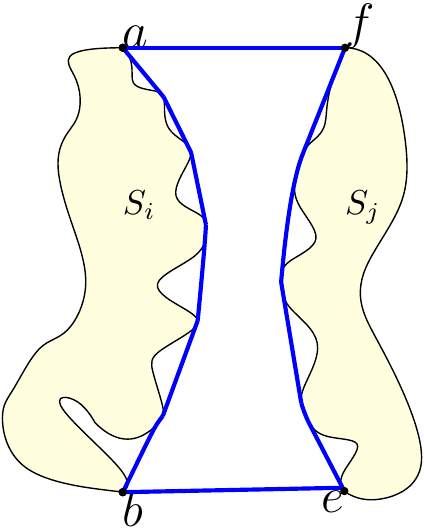}
\caption{Illustrating an open hourglass (blue).}
\label{fig:f1}
\end{wrapfigure}

Based on $G(\cal{F(S)})$, \cite{journals/talg/ChenW15} computes a planar 3-regular graph, denoted by $G^3$ (the degree of each node in it is three), possibly with loops and multi-edges, as follows. 
First, it removes every degree-one node from $G(\cal{F(S)})$ along with its incident edge; repeats this process until no degree-one node exists. 
Second, the algorithm removes every degree-two node from $G(\cal{F(S)})$ and replaces its two incident edges by a single edge; it repeats this process until no degree-two node exists. 
The number of faces, nodes, and edges in the resulting graph $G^3$ is proved to be $O(h)$. 
Each node of $G^3$ corresponds to a region of $BDD(\cal{F(S)})$, which is called a junction region. 
Removal of all junction regions from $G^3$ results in $O(h)$ corridors, each of which corresponds to one edge of $G^3$.



The boundary of each corridor $C$ consists of four parts (refer Figs.~\ref{fig:f1},~\ref{fig:f2}): 
(1) A boundary portion of a splinegon obstacle $S_i \in \calS$, from a point $a$ to a point $b$,
(2) a diagonal of a junction triangle from $b$ to a point $e$ on an obstacle $S_j\in \calS$ ($S_i=S_j$ is possible),
(3) a boundary portion of the obstacle $S_j$ from $e$ to a point $f$, and
(4) a diagonal of a junction triangle from $f$ to $a$.

\begin{wrapfigure}{r}{3.5cm}
\includegraphics[width=3.5cm]{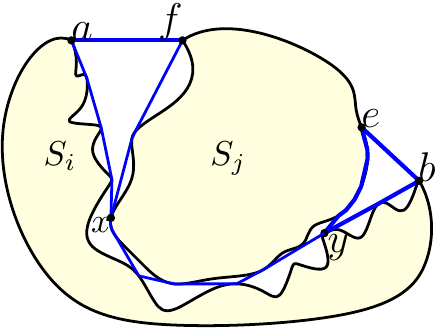}
\caption{Illustrating a closed hourglass (blue).}
\label{fig:f2}
\end{wrapfigure}

Note that the corridor $C$ itself is a simple splinegon.
Let $\pi(a,b)$ (resp., $\pi(e,f)$) be the Euclidean shortest path from $a$ to $b$ (resp., $e$ to $f$) in $C$. 
The region $H_C$ bounded by $\pi(a,b), \pi(e,f)$, $\overline{be}$, and $\overline{fa}$ is called an {\it hourglass}, which is {\it open} if $\pi(a,b)\cap \pi(e,f)=\emptyset$ and {\it closed} otherwise. 
If $H_C$ is open (refer Fig.~\ref{fig:f1}.), then both $\pi(a,b)$ and $\pi(e,f)$ are convex chains and are called the {\it sides} of $H_C$.
Otherwise, $H_C$ consists of two {\it funnels} and a path $\pi_C=\pi(a,b)\cap \pi(e,f)$ joining the two apices of the two funnels, and $\pi_C$ is called the {\it corridor path} of $C$ (refer Fig.~\ref{fig:f2}).
The paths $\pi(b,x), \pi(e,x), \pi(a,y)$, and $\pi(f,y)$ are termed {\it sides of funnels} of hourglass $H_C$.
The sides of funnels are convex chains.

In the following subsections, analogous to \cite{journals/comgeo/InkuluK09a}, we define staircase structures for the splinegonal domain and use it in defining the visibility graph of the splinegonal domain.

\section{Staircase structures for a splinegonal domain}
\label{sect:staircasestr}

Following \cite{journals/comgeo/InkuluK09a}, we define staircase structures for the splinegonal domain $\calS$.
Let $p$ and $q$ be points on a convex chain $CC$ which is a side of an hourglass.
Then the boundary between $p$ and $q$ along $CC$ is termed a {\it section of $CC$}. 

The set of vertices $V_{\mathrm{ortho}}$ is defined  such that $v \in V_{\mathrm{ortho}}$ if and only if either of the following is true:
\begin{itemize}
\item[(i)] $v$ is  an endpoint of a corridor convex chain,
\item[(ii)] $v$ is a point on some corridor convex chain $CC$, with the property that there exists either a horizontal or a vertical tangent to $CC$ at $v$.
\end{itemize}

Let ${\cal O}(p)$ be the orthogonal coordinate system defined with $p \in V_{\mathrm{ortho}}$ as the origin, and horizontal $x$-axis and vertical $y$-axis passing through $p$.
We next adopt and redefine the staircase structures from \cite{conf/socg/ClarksonKV87,journals/comgeo/InkuluK09a}. 
For $i \in \{1, 2, 3, 4\}$, we define a set of points $\pi_i(p)$ as: a point $r \in \pi_i(p)$ if and only if $r \in V_{\mathrm{ortho}}$ and $r$ is located in the $i^{th}$ quadrant of ${\cal O}(p)$.  
Furthermore, we define a set of points $S_i(p)$ as follows: 
a point $q$ is in the set $S_i(p)$ (refer Fig.~\ref{fig:s1p}) if and only if

\begin{wrapfigure}{r}{6.5cm}
\includegraphics[width=6.5cm]{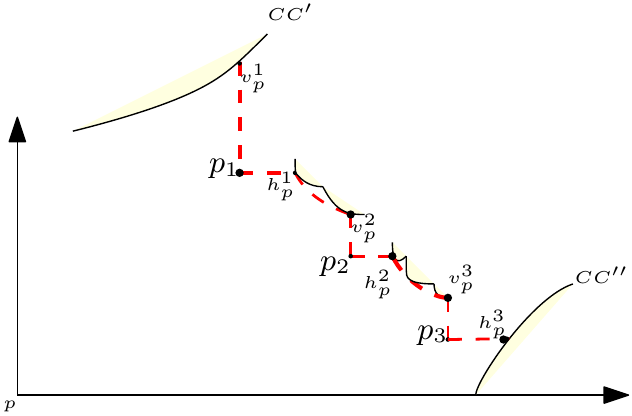}
\caption{Illustrating staircase structure with $S_1(p) = {p_1,p_2,p_3}$.}\label{fig:s1p}
\end{wrapfigure}

\begin{itemize}
\item[(i)]
 $q \in \pi_i(p)$, 
\item[(ii)]
 there is no $p'$ (distinct from $p$) such that $p'$ is in $\pi_i(p)$ and $q $ is in $\pi_i(p')$, and
\item[(iii)]
 $q$ is visible from $p$.
\end{itemize}

We assume that the points in $S_i(p)$ are sorted in increasing x-order.

The proofs of the lemmas and theorems stated in this section are similar to the ones provided for the polygonal domain in \cite{journals/comgeo/InkuluK09a}.
For the sake of completion, we have provided the proofs of these in Appendix. 

\begin{lemma}
\label{lem:ascseq}
Sorting the set of points in $S_1(p)$ in increasing x-order results in the same set of points being sorted in decreasing y-order (or, vice versa).
\end{lemma}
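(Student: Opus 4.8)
The plan is to argue that any two distinct points $q_1, q_2 \in S_1(p)$ must satisfy the following mutual-exclusion property: if $q_1$ has smaller $x$-coordinate than $q_2$ (in the coordinate system $\mathcal{O}(p)$), then $q_1$ has strictly larger $y$-coordinate than $q_2$. Since all points of $S_1(p)$ lie in the first (open) quadrant of $\mathcal{O}(p)$, establishing this for every pair immediately gives that the increasing-$x$ order coincides with the decreasing-$y$ order, which is exactly the claim. So the whole lemma reduces to a statement about a single pair of points.

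First I would set up the contradiction: suppose $q_1, q_2 \in S_1(p)$ are distinct with $x(q_1) \le x(q_2)$ and $y(q_1) \le y(q_2)$ (not both equalities, else the points would coincide). The key observation is that then $q_1$ lies in the first quadrant of the coordinate system $\mathcal{O}(q_2)$ centered at $q_2$ — i.e., $q_1 \in \pi_1(q_2)$ — because translating the origin from $p$ to $q_2$ only shifts coordinates, and $x(q_1) - x(q_2) \le 0$, $y(q_1) - y(q_2) \le 0$ places $q_1$ in the third quadrant relative to $q_2$; I need to be careful with the orientation convention here. Let me restate: the condition for $q$ to be "dominated" in the relevant sense is that $q$ lies in $\pi_i(p')$ for some $p'$, and condition (ii) in the definition of $S_i(p)$ forbids exactly this. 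So I would show that the coordinate inequalities force $q_2 \in \pi_1(q_1)$ (with $q_1$ playing the role of $p'$ and $q_1 \in \pi_1(p)$ by membership in $S_1(p)$), which directly violates condition (ii) for $q_2$. Symmetrically, if the inequalities point the other way, condition (ii) fails for $q_1$.

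The main obstacle — really the only nontrivial point — is handling the degenerate case where one coordinate is equal, say $x(q_1) = x(q_2)$ but $y(q_1) < y(q_2)$, and more generally making sure the "quadrant" memberships $\pi_i$ are defined on closed versus open quadrants consistently with how ties are broken, since the paper assumes no splinegon edge is axis-parallel but $V_{\mathrm{ortho}}$ still contains points with horizontal/vertical tangents and corridor-chain endpoints. I would resolve this by appealing to the axis-parallel-edge general position assumption together with the fact that $V_{\mathrm{ortho}}$ points on a given convex chain are distinct, so two points of $S_1(p)$ sharing an $x$-coordinate with $p$ strictly between (or the analogous $y$ case) can be ruled out or absorbed into the dominance argument; the sorting in the statement is "increasing $x$-order," so I would fix the tie-breaking rule to be consistent and note it does not affect the conclusion. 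Once the pairwise property is in hand, the equivalence of the two sorted orders is immediate, and the parenthetical "or vice versa" follows by the same argument with the roles of $x$ and $y$ swapped.
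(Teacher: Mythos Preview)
Your argument is correct and is essentially the same as the paper's: both proofs observe that if two points of $S_1(p)$ had both coordinates non-decreasing together, one would lie in $\pi_1(\cdot)$ of the other, contradicting condition~(ii) in the definition of $S_1(p)$. The paper states this in two lines for adjacent points in the sorted order, while you phrase it for an arbitrary pair and add a discussion of coordinate ties; the momentary quadrant-orientation slip in your draft is self-corrected and the final version (showing $q_2\in\pi_1(q_1)$ with $q_1\in\pi_1(p)$) is exactly right.
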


We term two points $\{p_u, p_v\} \subseteq S_i(p)$  as {\it adjacent} in $S_i(p)$ if no point $p_l \in S_i(p)$ occurs between $p_u$ and $p_v$ when the points in $S_i(p)$ are ordered  by either the x- or $y$-coordinates.

Let $p_1, p_2, \ldots, p_k$ be the points in $S_1(p)$ in increasing $x$-order.
Let $h_j$ be the rightward horizontal ray from $p_j$.
And, let $v_j$ be the upward vertical ray from $p_j$.
The ray  $h_j$ intersects either a corridor convex chain or $v_{j+1}$.
Let this point of intersection be $h_{j}^p$.
The ray $v_j$ first intersects  either a corridor convex chain or $h_{j-1}$.
Let this point of intersection be $v_{j}^p$.
If the ray does not intersect any other line or line segment then the point $h_j^p$ or $v_j^p$ could be at infinity.
Let $R_j$ ($j \in \{1, \ldots, k\}$) denote the unique sequence of sections of corridor convex chains/bounding edges that join $h_{j}^p$ and $v_{j+1}^p$.
As will be proved, $R_j$ is continuous.
Note that for the case in which $h_{j}^p = v_{j+1}^p$, $R_j$ is empty.
The elements in the set $\bigcup_{\forall j \in \{1,2,\ldots,k\}} (v_j \cup h_j \cup R_j)$ form a contiguous sequence, termed as the {\em $S_1(p)$-staircase} (refer Fig.~\ref{fig:s1p}).
Analogously, $S_i(p)$ for $i \in \{2,3,4\}$ are defined.
Note that the convex chains which may possibly intersect the coordinate axes and do not contain a point in $S_i(p)$ are not defined to be part of the staircases in the $i$-th quadrant of ${\cal O}(p)$.

We next characterize the structure of a staircase in splinegonal domain.
This is detailed in the following theorem whose proof is given in Appendix.


\begin{lemma}
\label{lem:geoment}
Along the $S_1(p)$-staircase, any two adjacent points in $S_1(p)$ are joined by at most three geometric entities.
These entities ordered by increasing $x$-coordinates are : (i) a horizontal line segment, (ii) a section of a convex chain where tangent to each point in that section of splinegon has a negative slope, and (iii) a vertical line segment.
\end{lemma}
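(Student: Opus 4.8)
The plan is to consider two adjacent points $p_u, p_v \in S_1(p)$ with $p_u$ to the left of $p_v$ (so by Lemma~\ref{lem:ascseq}, $p_u$ is above $p_v$), and trace the portion of the $S_1(p)$-staircase between them, which by construction is $h_u \cup R_u \cup v_v$ where $R_u$ is the sequence of sections of corridor convex chains joining $h_u^p$ and $v_v^p$. The horizontal segment along $h_u$ and the vertical segment along $v_v$ give entities (i) and (iii) immediately; the content of the lemma is that the intermediate part $R_u$ is a \emph{single} section of \emph{one} convex chain along which every tangent has negative slope. I would establish this in three stages.

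First I would argue monotonicity of slope along any relevant section. Recall each corridor convex chain is a side of an hourglass and hence a convex chain; walking along it in increasing $x$-order, the tangent direction turns monotonically. The vertices of $V_{\mathrm{ortho}}$ of type (ii) are precisely the points where a horizontal or vertical tangent occurs, so \emph{between} two consecutive points of $V_{\mathrm{ortho}}$ on a chain the tangent slope is either everywhere positive, everywhere negative, everywhere $>$ all finite values on one monotone piece — in any case, it does not pass through $0$ or $\infty$, so its sign is constant on each such piece. I would then show that on the piece of $R_u$ the sign must be negative: if the tangent slope were positive somewhere on $R_u$, the chain would be locally "descending to the left", and a short case analysis on how the rays $h_u$ (going right from $p_u$) and $v_v$ (going up from $p_v$, with $p_v$ below and to the right) can meet the chain shows this contradicts either the definition of $h_u^p$ / $v_v^p$ as the \emph{first} intersection point, or the adjacency of $p_u$ and $p_v$ in $S_1(p)$ (a positive-slope dip would expose a point of $\pi_1(p)$ visible from $p$ that dominates, violating condition (ii) in the definition of $S_i(p)$, or would block visibility). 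This step — pinning the sign and ruling out that the staircase "switches chains" or revisits a vertical/horizontal tangent in between — is the one I expect to be the main obstacle, and it is exactly where the structure of hourglass sides and the minimality built into the definitions of $R_j$, $h_j^p$, $v_j^p$ must be used carefully.

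Second, I would show $R_u$ cannot contain a horizontal or vertical tangent point in its interior: such a point would be a vertex of $V_{\mathrm{ortho}}$ lying strictly inside the first quadrant of some $\mathcal{O}(\cdot)$ and, being visible from $p$ along the staircase, would itself be a candidate for $S_1(p)$ between $p_u$ and $p_v$, contradicting adjacency (again via conditions (i)--(iii) defining $S_i(p)$). This forces $R_u$ to be a single negative-slope section rather than a concatenation of several sections broken at axis-parallel tangents, and it also rules out $R_u$ jumping from one convex chain to another, since any such transition happens at an endpoint of a corridor convex chain, which is a type-(i) vertex of $V_{\mathrm{ortho}}$ and hence subject to the same adjacency contradiction.

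Finally, I would assemble the pieces: between $p_u$ and $p_v$ the staircase consists of the horizontal segment from $p_u$ to $h_u^p$, then the section $R_u$ of a single convex chain with negative-slope tangents from $h_u^p$ to $v_v^p$, then the vertical segment from $v_v^p$ to $p_v$, where the first and/or third may be degenerate (for instance $R_u = \emptyset$ when $h_u^p = v_v^p$, or a segment of zero length when a ray meets the chain immediately). Ordering these by $x$-coordinate gives exactly (i), (ii), (iii) in the stated order, so at most three geometric entities join the two adjacent points, completing the proof. I would close by noting that the symmetric statements for $S_i(p)$, $i \in \{2,3,4\}$, follow by reflecting the coordinate axes.
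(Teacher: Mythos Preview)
Your overall plan matches the paper's: between adjacent $p_j,p_{j+1}\in S_1(p)$ the staircase is $l_h\cup R_j\cup l_v$, and the work is to show $R_j$ is a single section of one convex chain with everywhere negative tangent slope, by arguing that any interior $V_{\mathrm{ortho}}$ point on $R_j$ would itself belong to $S_1(p)$ and break adjacency.

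The gap is in the second stage, where you assert that a horizontal/vertical tangent point or a chain endpoint on $R_j$ is ``visible from $p$ along the staircase'' and hence a candidate for $S_1(p)$. Visibility from $p$ (condition~(iii)) and non-domination (condition~(ii)) are \emph{not} automatic for a point sitting on the staircase: a priori some other convex chain could sit in the wedge between $pp_j$, $pp_{j+1}$ and the staircase and either block the segment from $p$ or supply a dominating $p'\in\pi_1(p)$. The paper closes this gap with a preliminary sub-lemma you are missing: letting $R$ be the region bounded by $pp_j$, $l_h$, the staircase section, $l_v$, and $p_{j+1}p$, one first shows that no convex chain lies inside $R$ (any such chain would have an endpoint in $S_1(p)$ strictly between $p_j$ and $p_{j+1}$). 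Only after $R$ is cleared can one argue visibility; and even then the paper does not simply take ``any'' $V_{\mathrm{ortho}}$ point on $R_j$, but picks the first chain $CC_j$ along the staircase that admits a tangent from $p$ at a visible point $p_t$, and runs a case analysis on the signs of the tangent slopes of the two arcs of $CC_j$ meeting at $p_t$ and at the entry point $q_b$ to produce, in each case, a specific point $r\in V_{\mathrm{ortho}}$ that is provably visible from $p$. Your ``short case analysis on how $h_u$ and $v_v$ meet the chain'' does not supply this; in particular, your claim that a chain-to-chain transition point, being a type-(i) vertex of $V_{\mathrm{ortho}}$, is automatically in $S_1(p)$ needs exactly the emptiness of $R$ to justify both (ii) and (iii).

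A second, smaller gap: once $R_j$ is known to lie on a single chain, you still need to argue that the tangent slope is negative \emph{throughout}, not merely of constant sign. The paper does this in two further cases (on whether the chain enters at $h_j^p$ with positive or negative slope), again using the emptiness of $R$ to reach a contradiction via a visible $r\in V_{\mathrm{ortho}}$ with a horizontal or vertical tangent. Your convexity/monotone-turning remark gets constant sign on an arc between consecutive $V_{\mathrm{ortho}}$ points, but you have not yet shown $h_j^p$ and $v_{j+1}^p$ lie on the \emph{same} such arc, so this step also leans on the missing sub-lemma.
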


We now define the weighted restricted  visibility graph $G_{\mathrm{vistmp}}$($V_{\mathrm{vistmp}} = V_{\mathrm{ortho}} \cup V_1, E_{\mathrm{vistmp}} = E_{\mathrm{occ}} \cup E_{\mathrm{1}} \cup E_{\mathrm{tmp}}$):
\begin{itemize}
\item[-] For each $v \in V_{\mathrm{ortho}}$, let $v_L$ (resp. $v_R$) be the horizontal projection of $v$ onto the first corridor convex chain in leftward (resp. rightward) direction.
Similarly, let $v_U$ (resp. $v_D$) be the vertical projection of $v$ onto the first corridor convex chain in upward (resp. downward) direction.
If no such corridor chain is encountered then the projection occurs at infinity.
(Note that the orthogonal projection does not exist in at least one direction.)
Similarly, let the vertical projections of $v$ in increasing and decreasing direction of $y$-coordinates be $v_U$ and $v_D$ respectively.
For each point  $p \in \{v_L, v_R, v_D, v_U\}$, if the distance of $p$ from $v$ is finite  then $p$ is added to $V_1$ and the edge $pv$ is added to $E_1$.  
The weight of edge $e \in E_1$ is the rectilinear distance between its two endpoints.
\item[-] An edge $e=(p,q)$ belongs to $E_{\mathrm{occ}}$ if and only if the following conditions hold
(i) $\{p, q\} \subseteq V_{\mathrm{vistmp}}$, (ii) both $p$ and $q$ belong to the same corridor convex chain, and 
(iii) no point in $V_{\mathrm{vistmp}}$ lies between $p$ and $q$ along the chain.
The weight of edge $e$ is the rectilinear distance along the section of convex chain between $p$ and $q$.
\item[-] An edge $e'=(p',q')$ with $p' \in V_{\mathrm{ortho}}$ belongs to $E_{\mathrm{tmp}}$ if and only if $q' \in S_i(p')$.  
The weight of $e'$ is the rectilinear distance along $e'$.
\end{itemize}

\begin{theorem}
\label{thm:vistempcorr}
Let $\{p, q\} \subseteq V_{\mathrm{vistmp}}$.  Then a shortest path from $p$ to $q$ in $G_{\mathrm{vistmp}}$ defines a shortest path in $L_1$ metric from $p$ to $q$ that does not intersect any of the splinegon obstacles in $\calS$.
\end{theorem}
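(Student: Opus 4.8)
The plan is to establish the two inequalities $dist_\calS(p,q)\le d_{\mathrm{vt}}(p,q)$ and $d_{\mathrm{vt}}(p,q)\le dist_\calS(p,q)$, where $d_{\mathrm{vt}}(p,q)$ denotes the shortest-path distance between $p$ and $q$ in the weighted graph $G_{\mathrm{vistmp}}$, and in tandem to exhibit a ``realization'' map that turns any $p$--$q$ walk in $G_{\mathrm{vistmp}}$ into an obstacle-avoiding rectilinear path of the same total weight. Together these give that a shortest $p$--$q$ path in $G_{\mathrm{vistmp}}$ realizes a rectilinear path in $\calF(\calS)$ whose length equals $dist_\calS(p,q)$, which is precisely the assertion.

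For the soundness direction, I would show that every edge of $G_{\mathrm{vistmp}}$ is realizable by an obstacle-avoiding rectilinear path of weight equal to the edge weight, and then concatenate. An edge in $E_1$ is an axis-parallel segment which, by construction, is terminated by the first corridor convex chain it meets, so it lies in $\calF(\calS)$ and its rectilinear length is its weight. An edge in $E_{\mathrm{occ}}$ is realized by the section of corridor convex chain between its two endpoints; this section lies on $bd(S)$ for some $S\in\calS$, hence in $\calF(\calS)$, and its rectilinear arclength is exactly the assigned weight. An edge $(p',q')\in E_{\mathrm{tmp}}$ with $q'\in S_i(p')$ is realizable because $q'$ is visible from $p'$ and, by the defining properties of $S_i(p')$ together with Lemmas~\ref{lem:ascseq} and~\ref{lem:geoment} on the structure of the $S_i(p')$-staircase, there is an $xy$-monotone rectilinear path from $p'$ to $q'$ of length $|x_{p'}-x_{q'}|+|y_{p'}-y_{q'}|$ that avoids every splinegon. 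Concatenating realizations along any $p$--$q$ walk in $G_{\mathrm{vistmp}}$ then yields a rectilinear path in $\calF(\calS)$ of equal total weight, so $dist_\calS(p,q)\le d_{\mathrm{vt}}(p,q)$.

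For the completeness direction, I would take a rectilinear shortest path $\sigma$ from $p$ to $q$ in $\calF(\calS)$ and normalize it using standard properties of $L_1$ shortest paths together with the corridor/hourglass decomposition of Section~\ref{sect:corrhourglasses}: $\sigma$ is a concatenation of $xy$-monotone staircase pieces, it bends only where it is in contact with an obstacle, and the only obstacle boundary portions it can touch are corridor convex chains (within a corridor the relevant obstacle boundary is convex, and at junction regions contacts occur at chain endpoints). Along each maximal contact with a convex chain, $\sigma$ must coincide with a portion of the corresponding staircase structure, and along each maximal free monotone piece it runs between two successive staircases. An exchange argument then slides the turning points of $\sigma$ onto points of $V_{\mathrm{ortho}}$ (chain endpoints or points admitting a horizontal or vertical tangent, i.e., items (i)--(ii) of the definition of $V_{\mathrm{ortho}}$) without increasing length, and decomposes the resulting path into links that are each of type $E_1$ (axis-parallel free portions, and projections onto chains), $E_{\mathrm{tmp}}$ (monotone ``jumps'' between consecutive staircases, captured by the $S_i$ edges), or $E_{\mathrm{occ}}$ (chain-hugging portions, split at the intervening vertices of $V_{\mathrm{vistmp}}$ on the chain). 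Hence $d_{\mathrm{vt}}(p,q)\le \mathrm{length}(\sigma)=dist_\calS(p,q)$.

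Combining the two inequalities gives $d_{\mathrm{vt}}(p,q)=dist_\calS(p,q)$, and applying the realization map of the soundness direction to a shortest $p$--$q$ path in $G_{\mathrm{vistmp}}$ produces an obstacle-avoiding rectilinear path of length $dist_\calS(p,q)$, i.e., a rectilinear shortest path, as claimed. The main obstacle is the completeness direction: making rigorous the claim that a normalized shortest path contacts obstacles only along corridor convex chains, that all its turning points can be pushed to $V_{\mathrm{ortho}}$, and that each monotone free portion between two successive staircases is captured by a single $E_1$ or $E_{\mathrm{tmp}}$ link (and, dually, the realizability of $E_{\mathrm{tmp}}$ edges by monotone free paths used in the soundness direction). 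This parallels the polygonal-domain analysis of~\cite{journals/comgeo/InkuluK09a} but must be re-derived using the splinegon staircase lemmas of Section~\ref{sect:staircasestr}; the detailed argument is deferred to the Appendix.
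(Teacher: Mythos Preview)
Your two-inequality framework is sound, and the soundness direction (realizability of each edge class) is essentially correct and indeed left implicit in the paper. The substantive difference is in the completeness direction.

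The paper does \emph{not} normalize a shortest path globally and then decompose it into $E_1/E_{\mathrm{occ}}/E_{\mathrm{tmp}}$ links as you propose. Instead it argues locally and inductively from the current source $p$: it asks whether a geometric shortest path $Q$ crosses the $S_i(p)$-staircase. If it does (Case~(a)), the crossing occurs on an orthogonal segment of the staircase adjacent to some $p_1\in S_i(p)$, and the prefix of $Q$ up to that crossing is replaced by the $E_{\mathrm{tmp}}$ edge $pp_1$ plus an orthogonal piece, at no cost; one then recurses from $p_1$ on a strictly shorter problem. If $Q$ does not cross any staircase (Case~(b)), the paper shows the first leg of $Q$ must hug a convex chain intersecting an axis of $\mathcal{O}(p)$, and reroutes that leg through the projection $p'\in V_1$ and a portion of the chain ($E_1$ and $E_{\mathrm{occ}}$ edges), again recursing. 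Termination comes from monotone decrease in remaining distance.

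Your global exchange argument may be made to work, but the step you flag as ``the main obstacle'' is genuinely the crux: after sliding turning points onto $V_{\mathrm{ortho}}$, you still need that each free monotone jump goes from some $p'$ to a point in $S_i(p')$, since that is the \emph{only} way an $E_{\mathrm{tmp}}$ edge exists. Nothing in your normalization guarantees this dominance relation between consecutive contacts; it is precisely what the paper's staircase-crossing analysis supplies. So your outline is a reasonable reorganization, but to close it you would end up re-deriving the Case~(a)/Case~(b) dichotomy rather than bypassing it.
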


\section{Reduction algorithms}
\label{sect:redalgo}

Given a splinegonal domain $\calS$ and two points $s \in \calF(\calS)$ and $t \in \calF(\calS)$, we reduce the problem of computing a $SP_{\calS}(s,t)$ to the problem of finding a rectilinear shortest path between two points in $\cal{F(P)}$ where $\calP$ is a polygonal domain. 
The $\calP$ comprises of polygons, each of which correspond to a unique splinegon in $\calS$. 
We construct polygons in $\calP$ such that the rectilinear distance between $s$ and $t$ amid polygonal obstacles in $\calP$ is equal to the rectilinear distance between $s$ and $t$ amid splinegons in $\calS$.

We first describe the reduction procedure when the splinegon obstacles in $\calS$ are concave-in.
As mentioned, the reduction for this special case does not relies on the details of the algorithm used to compute a rectilinear shortest path in polygonal domain.   
In a later subsection, we devise an algorithm for the general case in which splinegon obstacles can be arbitrary.

\subsection{Concave-in splinegon obstacles}
\label{subsect:algoconcavein}

  \begin{wrapfigure}{r}{3.4cm}
    \includegraphics[width=3.4cm]{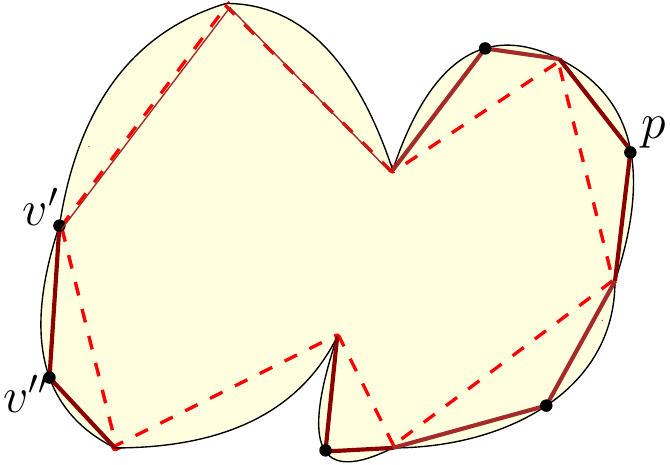}
    \caption{Illustrating carrier polygon (dashed) and the polygon constructed (brown).}
    \label{fig:carrier}
  \end{wrapfigure}

In this subsection, we devise an algorithm to find a shortest path in rectilinear metric when $\calS$ comprises of concave-in splinegons.
We reduce this problem to the problem of computing a shortest path in rectilinear metric amid simple polygonal obstacles.
We compute the set $\calP$ of $h$ simple polygonal obstacles from splinegon obstacles in $\calS$. 
For each splinegon $S \in \calS$ that has $n'$ vertices, we compute its corresponding simple polygonal obstacle $P \in \calP$ with $O(n')$ vertices.
Further, we introduce points $s$ and $t$ in $\cal{F(P)}$ at their respective coordinate locations. 

\ignore {
In computing $\calP$ from $\calS$, we ensure the following
(i) $P \subseteq S$, \\
(ii) for any two points $s, t \in \cal{F(S)}$ that are outside the carrier polygons of splinegons in $\calS$, there is a rectilinear shortest path between $s$ and $t$ amid splinegons in $\calS$ if and only if there exists a rectilinear shortest path between $s$ and $t$ amid polygons in $\calP$ between $s$ and $t$, and \\
(iii) the rectilinear shortest distance between $s$ and $t$ in $\cal{F(S)}$ is equal to the rectilinear shortest distance between $s$ and $t$ in $\cal{F(P)}$.
}

For every $S \in \calS$, we define the vertex set of $P_S \in \calP$ that corresponds to $S \in \calS$ as described herewith: every vertex of $S$ is a vertex of $P_S$; for every point $p \in bd(S)$, if tangent to $S$ at $p$ is either horizontal or vertical then $p$ is a vertex of $P$.
Apart from these two sets of points, no additional point is a vertex of $P_S$.
Let $V_P$ be the set of vertices of $P_S$. 
For any two successive vertices $v', v'' \in V_P$ that occur successively while traversing $bd(S)$, we add an edge between $v'$ and $v''$ to obtain polygon $P_S$ (refer Fig.~\ref{fig:carrier}).

\begin{lemma}
\label{lem:simp}
Every polygon $P \in \calP$ is a simple polygon.
\end{lemma}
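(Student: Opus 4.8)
Fix $S \in \calS$ and put $P = P_S$. The plan is to show that the closed polygonal curve $bd(P)$ — whose vertices are the points of $bd(S)$ listed in the cyclic order in which they occur along $bd(S)$, with cyclically consecutive ones joined by a straight segment — is simple. The first observation is that each edge of $P$ is a chord of a single curved edge of $S$: since every vertex of $S$ is a vertex of $P$, two cyclically consecutive vertices $v', v''$ of $P$ have no vertex of $S$ strictly between them along $bd(S)$, so the sub-arc of $bd(S)$ from $v'$ to $v''$ lies inside one curved edge $s_i$. As the region $S\text{-}seg_i$ bounded by $s_i$ and the carrier edge $e_i$ is convex and has $s_i$ as part of its boundary, $\overline{v'v''} \subseteq \overline{S\text{-}seg_i} \subseteq \overline{S}$, the last inclusion using that $S$ is concave-in ($S\text{-}seg_i \subseteq S$). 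A further point needed below: an edge of $P$ supported by $s_i$ meets $bd(S)$ only at its own two endpoints, because for $k \ne i$ every common point of that edge with $s_k$ lies in $\overline{S\text{-}seg_i}\cap\overline{S\text{-}seg_k}$.

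Next I would rule out crossings, in two cases. For two edges of $P$ supported by the same curved edge $s_i$: if $u = q_0, q_1, \ldots, q_m = w$ are the vertices of $P$ lying on $s_i$, in the order they occur along $s_i$ (namely, the two endpoints of $s_i$ together with the points of $s_i$ at which $s_i$ has a horizontal or a vertical tangent), then, $s_i$ being a connected arc on the boundary of the convex region $S\text{-}seg_i$ (which lies on one fixed side of $s_i$), the $q_j$ are in convex position and $q_0q_1\cdots q_m$ is an open convex polygonal chain, which has no self-intersection. For two edges supported by distinct curved edges $s_i \ne s_j$: using the geometric fact that $\overline{S\text{-}seg_i}\cap\overline{S\text{-}seg_j}$ is empty when the carrier edges $e_i, e_j$ are non-adjacent and equals the single carrier vertex $v$ shared by $e_i$ and $e_j$ otherwise, two such edges can meet only at $v$; and since $v$ is an endpoint of $s_i$ as well as of $s_j$, any edge of $P$ containing $v$ has $v$ as an endpoint (by the last observation of the previous paragraph), so two edges meeting at $v$ are the two edges of $P$ incident to $v$ and are therefore consecutive. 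Combining the two cases, non-consecutive edges of $P$ are disjoint and consecutive edges meet only at their common vertex; hence $P$ is a simple polygon.

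The step I expect to require the most care is the geometric fact invoked in the second paragraph: that the convex ``bulges'' $S\text{-}seg_i$ of a concave-in splinegon are pairwise interior-disjoint and pairwise meet only at carrier vertices. Intuitively, a violation would force $bd(S) = \bigcup_k s_k$ to self-intersect, contradicting that $S$ is a simple splinegon; to make this rigorous I would use that $bd(S)$ is a simple closed curve (so the arcs $s_k$ pairwise meet only at shared endpoints), that the carrier polygon of $S$ is a simple polygon, and that for a concave-in splinegon the relative interior of each carrier edge $e_i$ lies in $\mathrm{int}(S)$ — together these reduce $\overline{S\text{-}seg_i}\cap\overline{S\text{-}seg_j}$ to $(s_i\cup e_i)\cap(s_j\cup e_j)$ and then to the stated set. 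Finally I would remark that the harmless degeneracies (a curved edge that happens to be straight, or a point that is simultaneously a splinegon vertex and a horizontal/vertical-tangent point) do not affect any step, since each edge of $P$ remains a chord of a single convex arc contained in the associated $\overline{S\text{-}seg_i}$.
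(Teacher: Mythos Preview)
Your argument is correct and rests on the same core idea as the paper's: each edge of $P_S$ is a chord of a single spline arc $s_i$ and hence lies in the convex region $S\text{-}seg_i$, so simplicity follows once one controls how these regions interact. The paper's proof is much terser---it notes that the extra vertices lie outside the carrier polygon, that the two edges incident to such a vertex meet only there, and then asserts that ``no point belonging to any edge of $P_S$ lies in the region bounded by $s$ and $e$''; this last claim is literally false as written (the chords of $s_i$ certainly lie in $S\text{-}seg_i$) and is evidently meant to say that edges arising from \emph{other} splines do not enter $S\text{-}seg_i$. Your version makes the argument precise by splitting into edges supported by the same arc (handled via convex position) and edges supported by distinct arcs (handled via interior-disjointness of the regions $\overline{S\text{-}seg_i}$), and you rightly identify the disjointness of the $S\text{-}seg$ regions as the step needing the most care. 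Your sketch for that step---using that $\partial S$ is simple, the carrier polygon is simple, and the relative interiors of the carrier edges lie in $\mathrm{int}(S)$ so that $\partial S$ and $\partial P_c$ meet only at carrier vertices---is the right route and can be completed as you outline; it is exactly the content the paper's one-line justification leaves implicit.
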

\begin{proof}
Let $P_S \in \calP$ be the carrier polygon of $S \in \calS$.
For any vertex $v$ of $P_S$ that is not a vertex of $S$, we know that $v$ is both exterior to the carrier polygon of $S$ and it belongs to a spline $s$ of $S$.
Let $e$ be an edge of carrier polygon of $S$ whose endpoints are same as the endpoints of $s$.
It is immediate that two edges that incident to $v$ intersect only at $v$.
Since $S$ is both simple and concave-in, no point belonging to any edge of $P_S$ lies in the region bounded by $s$ and $e$ (refer Fig.~\ref{fig:carrier}). 
\end{proof}

\begin{lemma}
\label{lem:cvxpolyspline}
If a path $Q \in \cal{F(S)}$ is shortest between $s$ and $t$ amid polygons in $\cal{F(P)}$, then $Q$ is a shortest path amid splinegons in $\calS$.
\end{lemma}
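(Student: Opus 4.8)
The plan is to show that the polygon construction shrinks every splinegon only into its own $S$-$seg_i$ regions, so $\calF(\calS) \subseteq \calF(\calP)$, and then use this containment together with the fact that $Q$ is already a feasible path in $\calF(\calS)$ to conclude it is a shortest path there. First I would establish the containment $P_S \subseteq S$ for every splinegon $S \in \calS$: since $S$ is concave-in, each region $S$-$seg_i$ bounded by a spline edge $s_i$ and the corresponding carrier-polygon edge $e_i$ lies inside $S$; the edges of $P_S$ are chords of $S$ connecting consecutive points of $bd(S)$ that are either original vertices or points with horizontal/vertical tangents, and each such chord stays in the convex region between the spline arc it subtends and that arc's straight segment, hence inside $S$. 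Taking the union over all splinegons and complementing gives $\calF(\calS) \subseteq \calF(\calP)$; combined with the disjointness of the polygons in $\calP$ (Lemma~\ref{lem:simp} and the non-intersection of carrier polygons), this means any path avoiding the splinegon obstacles also avoids the polygonal obstacles, i.e. $dist_\calP(s,t) \le dist_\calS(s,t)$.

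Next I would argue the reverse inequality, or rather directly finish the claim: we are given a path $Q$ that lies in $\calF(\calS)$ and is shortest between $s$ and $t$ amid the polygonal obstacles in $\calP$; thus the length of $Q$ equals $dist_\calP(s,t)$. Since $Q \subseteq \calF(\calS)$, $Q$ is a feasible $s$--$t$ path amid the splinegons, so $dist_\calS(s,t) \le \mathrm{length}(Q) = dist_\calP(s,t)$. Together with the containment inequality $dist_\calP(s,t) \le dist_\calS(s,t)$ from the previous step, we get $dist_\calS(s,t) = dist_\calP(s,t) = \mathrm{length}(Q)$, so $Q$ is indeed a shortest path between $s$ and $t$ amid the splinegons in $\calS$.

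The main obstacle is the geometric containment claim $P_S \subseteq S$, and more precisely that the chords of $P_S$ never exit $S$. One has to be careful that a chord of $bd(S)$ connecting two consecutive special points may subtend an arc of the spline that is more than the portion of a single edge $s_i$ — but by the definition of the vertex set, every original vertex of $S$ is a vertex of $P_S$, so each edge of $P_S$ lies within a single spline edge $s_i$; moreover the inserted horizontal/vertical-tangent points subdivide $s_i$ into monotone pieces, and each chord of a monotone convex arc lies in the convex region between that arc and the segment $e_i$, which is exactly $S$-$seg_i \subseteq S$ by concave-in-ness. I would spell this out referencing the convexity of $S$-$seg_i$ from the definition of a splinegon and Figure~\ref{fig:carrier}, and note that the only subtlety — a chord spanning two adjacent spline edges — cannot occur because the shared vertex is retained in $V_P$.

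\begin{proof}
By the construction of $P_S$, every vertex of $S$ is a vertex of $P_S$, so each edge of $P_S$ joins two points of $bd(S)$ that lie on a single spline edge $s_i$ of $S$. The region $S$-$seg_i$ bounded by $s_i$ and the carrier-polygon edge $e_i$ is convex, and since $S$ is concave-in we have $S$-$seg_i \subseteq S$. An edge of $P_S$ contained in $s_i$ is a chord of the convex region $S$-$seg_i$, hence lies in $S$-$seg_i \subseteq S$ (refer Fig.~\ref{fig:carrier}). Taking the union over all edges and all splinegons, $P_S \subseteq S$ for every $S \in \calS$, and therefore $\calF(\calS) \subseteq \calF(\calP)$. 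Consequently, any $s$--$t$ path that avoids the splinegon obstacles in $\calS$ also avoids the polygonal obstacles in $\calP$, which gives $dist_\calP(s,t) \le dist_\calS(s,t)$.

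Now let $Q \in \calF(\calS)$ be a path that is shortest between $s$ and $t$ amid the polygons in $\calF(\calP)$; then $\mathrm{length}(Q) = dist_\calP(s,t)$. Since $Q$ lies in $\calF(\calS)$, it is a feasible $s$--$t$ path amid the splinegons in $\calS$, so $dist_\calS(s,t) \le \mathrm{length}(Q) = dist_\calP(s,t)$. Combining this with $dist_\calP(s,t) \le dist_\calS(s,t)$ yields $dist_\calS(s,t) = \mathrm{length}(Q)$, so $Q$ is a shortest path between $s$ and $t$ amid the splinegons in $\calS$.
\end{proof}
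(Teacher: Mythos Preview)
Your proof is correct and follows the same core idea as the paper: both arguments rest on the containment $\calF(\calS)\subseteq\calF(\calP)$ (equivalently $P_S\subseteq S$), from which the conclusion is immediate. The paper dispatches this in three lines by contradiction (if a shorter $Q'$ existed in $\calF(\calS)$, it would also lie in $\calF(\calP)$, contradicting optimality of $Q$ there), whereas you phrase the same thing via the two inequalities $dist_\calP(s,t)\le dist_\calS(s,t)$ and $dist_\calS(s,t)\le\mathrm{length}(Q)=dist_\calP(s,t)$; these are logically equivalent.

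Where you go further than the paper is in actually justifying $P_S\subseteq S$: you observe that every edge of $P_S$ is a chord of a single spline arc $s_i$ (because all original vertices of $S$ are retained in $V_P$), and that such a chord lies in the convex region $S\text{-}seg_i\subseteq S$ by concave-in-ness. The paper's proof simply asserts the containment (and in fact writes it backwards as ``$\calF(\calP)\subseteq\calF(\calS)$'', which is evidently a typo given the conclusion drawn). So your version is both more careful and more self-contained, at the cost of some extra length; the underlying argument is the same.
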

\begin{proof}
Let  $Q'$ be a s-t path amid splinegons in $\calS$ which is shorter than $Q$. 
Then $Q'$  belongs to $\cal{F(S)}$.
Since $\cal{F(P)} \subseteq \cal{F(S)}$, then $Q'$ should also belong to $\calF(P)$.
\end{proof}

\vspace{-0.2in}

\begin{figure}
\center{\includegraphics[width=5cm]{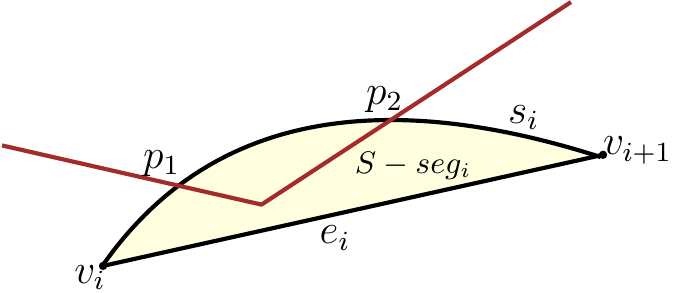}}
\caption{Illustrating the case of a path entering $S$-$seg_i$ region.}
\label{fig:entry}
\end{figure}

\vspace{-0.3in}

\begin{lemma}
\label{lem:ssegonce}
There exists a rectilinear shortest path $Q$ between $s$ and $t$ amid polygonal obstacles in $\calP$ such that no point of $Q$ belongs to any of the open $S$-$seg$ regions of splinegons in $\calS$. 
\end{lemma}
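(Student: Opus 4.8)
The plan is to start from an arbitrary rectilinear shortest path $Q_0$ between $s$ and $t$ amid the polygons of $\calP$ -- which, as usual, may be assumed to be a polygonal path with finitely many axis-parallel segments -- and to re-route the portions of $Q_0$ that enter the forbidden regions onto $bd(S)$, keeping the $L_1$-length from increasing, so that the modified path is still a rectilinear shortest path. The first step is to localise where the trouble can occur. For a splinegon $S$, let $Sh_S$ denote the ``shell'' $S \setminus \mathrm{int}(P_S)$. Since $S$ is concave-in, $S$-$seg_i \subseteq S$ for every edge, and since every edge of $P_S$ joins two points of a common spline $s_i$ it lies in the convex region $S$-$seg_i$; hence $bd(P_S) \subseteq S$, and, $P_S$ being a simple polygon (Lemma~\ref{lem:simp}) and $S$ simply connected, $P_S \subseteq S$. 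Thus $Sh_S$ is exactly the union of the ``slivers'' lying between $bd(P_S)$ and $bd(S)$. I would then verify that $Q_0$ can meet the interior of $S$-$seg_i$ only inside $Sh_S$ (see Fig.~\ref{fig:entry}): the part of $S$-$seg_i$ between $e_i$ and the chain of $P_S$-edges contained in $s_i$ is interior to $P_S$, hence disjoint from $Q_0$, whereas the remainder of the interior of $S$-$seg_i$, together with those $P_S$-edges themselves, lies in $Sh_S$. Consequently every point of $Q_0$ lying in some open $S$-$seg$ region lies in some shell, and it suffices to re-route the maximal sub-paths of $Q_0$ contained in shells.

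Fix such a maximal sub-path $\pi \subseteq Sh_S$. Since $Q_0$ cannot cross $bd(P_S)$ into $\mathrm{int}(P_S)$, the two endpoints $a,b$ of $\pi$ lie on the outer boundary $bd(S)$ (and if an endpoint is $s$ or $t$, then, as $s,t \in \calF(\calS)$ and $Sh_S \subseteq S$, that point already lies on $bd(S)$). Now decompose $Sh_S$ into its closed slivers: each sliver is the region bounded by one edge $\overline{vv'}$ of $P_S$ lying on a spline $s_i$ and the sub-arc of $s_i$ joining the consecutive $P_S$-vertices $v$ and $v'$; this sub-arc is convex and monotone in both coordinates (there is no horizontal or vertical tangent of $s_i$ strictly between $v$ and $v'$), so the sliver is convex, and two slivers can meet only in a single $P_S$-vertex. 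Because $\pi$ cannot cross $bd(P_S)$, it can move from one sliver into an adjacent one only through their common $P_S$-vertex; cutting $\pi$ at the $P_S$-vertices it visits therefore expresses it as a concatenation $\pi_1 \cdot \pi_2 \cdots \pi_r$ in which each $\pi_j$ stays inside a single closed sliver, with its endpoints being $P_S$-vertices of that sliver or one of $a,b$ (which then lies on the bounding sub-arc of that sliver).

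The re-routing step replaces each $\pi_j$ by the portion of $bd(S)$ joining its endpoints along the monotone sub-arc bounding the sliver that contains $\pi_j$. As that sub-arc is monotone in both coordinates, its $L_1$-length equals the $L_1$-distance between its endpoints, whereas $\pi_j$ is a path between the same endpoints and so has $L_1$-length at least that distance; hence the replacement does not increase the length, and performing it for every $\pi_j$ of every shell-portion of $Q_0$ yields an $s$-$t$ path $Q$ in $\calF(\calP)$ whose $L_1$-length is at most that of $Q_0$, so $Q$ is a rectilinear shortest path. Finally, $Q$ avoids every open $S$-$seg$ region: on the re-routed portions $Q$ runs along $bd(S)$, which is contained in $\calF(\calP)$ and is disjoint from the interior of every $S'$-$seg$ region -- for $S'=S$ because that interior lies in $\mathrm{int}(S)$, and for $S'\ne S$ because the splinegons are pairwise disjoint -- while the untouched portions of $Q_0$ lie outside every shell and hence, by the first paragraph, outside every open $S$-$seg$ region. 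I expect the main obstacle to be exactly this re-routing estimate, and in particular the bookkeeping that forces $Q_0$ to pass between slivers only through $P_S$-vertices, so that the monotone-sub-arc bound can be applied one sliver at a time.
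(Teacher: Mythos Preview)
Your proposal is correct and follows essentially the same approach as the paper: take a rectilinear shortest path in $\calF(\calP)$, and wherever it dips into an $S$-$seg$ region, re-route it along the bounding spline arc, which is $xy$-monotone by construction of $P_S$ and hence has $L_1$-length equal to the $L_1$-distance between its endpoints. The paper's proof is much terser---it simply observes that the entry and exit points $p_1,p_2$ lie on the spline $s_i$, which is $xy$-monotone between them, and replaces the sub-path directly---whereas you add the sliver decomposition and the bookkeeping that the path can pass between slivers only at $P_S$-vertices; this extra care is not wrong, but it is more than the paper supplies.
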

\begin{proof}
Let $Q$ be a shortest path that enters $S$-$seg_i$ region at a point $p_1$ and exits it at $p_2$.
(Refer to Fig.~\ref{fig:entry}.)
Let $s_i$ be the spline to which $p_1$ and $p_2$ belong. 
Also, let $s_i$ bounds a side of $S$-$seg_i$.
Since we have introduced vertices of $P_S$ at every point on the boundary of splinegon where there is a horizontal and/or vertical tangent to splinegon, spline $s_i$ is $xy$-monotone.
Hence, replacing the simple path $Q$ from $p_1$ to $p_2$ with the section of $s_i$ from $p_1$ to $p_2$ does not increase the length of $Q$. 
\end{proof}

\begin{lemma}
\label{lem:dist}
The rectilinear distance between $s$ and $t$ amid polygonal obstacles in $\calP$ equals to the rectilinear distance between $s$ and $t$ amid splinegons in $\calS$.
\end{lemma}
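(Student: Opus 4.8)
The plan is to prove the two inequalities $dist_\calP(s,t) \ge dist_\calS(s,t)$ and $dist_\calP(s,t) \le dist_\calS(s,t)$ separately, leveraging the preceding lemmas. One direction is essentially immediate: since each polygon $P_S$ was built with all its vertices lying on $bd(S)$ and, as observed in the proof of Lemma~\ref{lem:simp}, the concave-in condition forces $P_S \subseteq S$, we get $\calF(\calP) \supseteq \calF(\calS)$. Hence any \st{} path avoiding the splinegons in $\calS$ also avoids the polygons in $\calP$, so $dist_\calP(s,t) \le dist_\calS(s,t)$. (This is the content already packaged in Lemma~\ref{lem:cvxpolyspline}, which says a shortest polygonal-domain path that happens to lie in $\calF(\calS)$ is also shortest among splinegons; here I only need the containment of free spaces.)

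For the reverse inequality $dist_\calP(s,t) \ge dist_\calS(s,t)$, I would start from a rectilinear shortest path $Q$ between $s$ and $t$ amid the polygonal obstacles in $\calP$, and then invoke Lemma~\ref{lem:ssegonce} to assume without loss of generality that $Q$ avoids every open $S$-$seg_i$ region. The key point is that once $Q$ stays out of all the regions $S\text{-}seg_i$ bounded by a spline $s_i$ and its carrier chord $e_i$, the path $Q$ lies entirely in $\calF(\calS)$: the interior of each splinegon $S$ is covered by the interior of $P_S$ together with the union of the $S\text{-}seg_i$ regions (this is precisely where concave-in is used — for a concave-in edge, $S\text{-}seg_i \subseteq S$, and $P_S$ plus these lens-shaped regions reconstitutes all of $S$'s interior), so a path missing both the interior of $P_S$ and all the $S\text{-}seg_i$ regions misses the interior of $S$. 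Therefore $Q$ is an \st{} path amid the splinegons in $\calS$, giving $dist_\calS(s,t) \le \mathrm{length}(Q) = dist_\calP(s,t)$. Combining the two inequalities yields equality.

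The step I expect to require the most care is the claim that the interior of $S$ is exactly covered by $\mathrm{int}(P_S) \cup \bigcup_i (S\text{-}seg_i)$, i.e., that after Lemma~\ref{lem:ssegonce} "pushes" the path out of the lens regions it genuinely sits in $\calF(\calS)$ and not merely in $\calF(\calP)$. This needs the concave-in hypothesis in an essential way: for a concave-out edge the region $S\text{-}seg_i$ lies outside $S$, so $P_S$ would not be contained in $S$ and the argument breaks — which is consistent with the paper restricting this subsection to concave-in splinegons. I would make this precise by arguing edge-by-edge: traversing $bd(S)$, between two consecutive vertices of $P_S$ the spline $s_i$ is $xy$-monotone (as already noted in Lemma~\ref{lem:ssegonce}) and, being concave-in, bulges into $S$, so the straight edge of $P_S$ replacing it cuts off exactly the convex lens $S\text{-}seg_i$ from the interior of $S$; summing over all edges shows $\mathrm{int}(S)\setminus \mathrm{int}(P_S) = \bigcup_i (S\text{-}seg_i)$ up to boundary. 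Everything else — monotonicity of the replacement in Lemma~\ref{lem:ssegonce}, simplicity of $P_S$, the free-space containment — is already established, so the proof is short: restate the two inequalities, cite Lemmas~\ref{lem:simp}, \ref{lem:cvxpolyspline}, and \ref{lem:ssegonce}, and conclude.
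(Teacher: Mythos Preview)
Your proposal is correct and follows essentially the same approach as the paper's own proof: use the containment $\calF(\calS)\subseteq\calF(\calP)$ (equivalently $P_S\subseteq S$) for the inequality $dist_\calP(s,t)\le dist_\calS(s,t)$, and invoke Lemma~\ref{lem:ssegonce} to push a shortest $\calP$-path out of the $S$-$seg$ lenses, yielding a path in $\calF(\calS)$ of the same rectilinear length for the reverse inequality. The paper phrases this as a two-case argument and finishes via Lemma~\ref{lem:cvxpolyspline} rather than as two explicit inequalities, and it asserts ``$Q$ belongs to $\calF(\calS)$'' without spelling out the covering $\mathrm{int}(S)\subseteq\mathrm{int}(P_S)\cup\bigcup_i S\text{-}seg_i$ that you take care to justify, but the underlying argument is the same.
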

\begin{proof}
There are two cases to consider.
Suppose $SP_\calP(s, t)$ does not intersect with any of the open $S$-$seg$ regions.
In this case, since $\cal{F(S)} \subseteq \cal{F(P)}$, $SP_\calP(s,t)$ is a shortest path amid splinegons in $\calS$. 
On the other hand, suppose that $SP_\calP(s, t)$ does intersect with an open $S$-$seg_i$ region. 
From the proof of Lemma~\ref{lem:ssegonce}, we know that there exists a path between $s$ and $t$ that avoids the open $S$-$seg_i$ region.
Let $Q$ be the path between $s$ and $t$ resultant of all such sub-path replacements in $SP_\calP(s, t)$. 
The rectilinear distance along $Q$ equals to the rectilinear distance along $SP_\calP(s, t)$.
Further, $Q$ belongs to $\cal{F(S)}$.
Hence, due to the Lemma~\ref{lem:cvxpolyspline}, $Q$ is a shortest path between $s$ and $t$ amid splinegons in $\cal{S}$.
\end{proof}

\ignore {
\begin{figure}
\center{\includegraphics[width=3cm]{figs/sweep.pdf}}
\caption{Illustrating the sweep line algorithm.}
\label{fig:sweep}
\end{figure}
}

\begin{lemma}
\label{lem:transf}
Computing $\calP$ from $\calS$ takes $O(n)$ time.
\end{lemma}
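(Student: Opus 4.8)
The plan is to show that the construction of $\calP$ from $\calS$ described just before Lemma~\ref{lem:simp} can be carried out in linear time, splinegon by splinegon. Recall that for each $S \in \calS$ with $n'$ vertices, the vertex set $V_P$ of the corresponding polygon $P_S$ consists of (a) all $n'$ vertices of $S$, together with (b) for each spline edge $s_i$ of $S$, the points on $s_i$ where the tangent to $S$ is horizontal or vertical. The output polygon is obtained by connecting successive points of $V_P$ (in boundary order) by straight edges. Since $\sum_{S \in \calS} n' = n$, it suffices to show the work done per splinegon edge is $O(1)$, and that traversing the boundary to emit edges of $P_S$ is linear in the number of produced vertices.

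First I would handle the per-edge cost. Fix a spline edge $s_i$ of $S$ bounding the convex region $S\text{-}seg_i$. Because $S\text{-}seg_i$ is convex, the points on $s_i$ admitting a horizontal tangent, and the points admitting a vertical tangent, are each a single point or a single (possibly degenerate) subarc; in any case, by the assumption stated in Section~\ref{sect:intro} that each primitive operation on a splinegon edge — in particular ``finding a point on a splinegon edge that has a horizontal or vertical tangent to that splinegon edge at that point'' — takes $O(1)$ time, we extract the at most two such points (one horizontal-tangent point, one vertical-tangent point) in $O(1)$ time per edge. This gives at most $2$ new vertices per spline edge, so $|V_P| = O(n')$ for each splinegon and $\sum |V_P| = O(n)$ overall, consistent with the earlier claim that $P_S$ has $O(n')$ vertices.

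Next I would describe the global assembly. Traverse $bd(S)$ once in boundary order; as each original vertex and each newly inserted tangent-point is encountered, append it to an ordered list, then connect consecutive list entries by segments to form $bd(P_S)$. Sorting the inserted points along each edge is trivial since there are at most two of them per edge and the edge itself gives their cyclic order, so no global sort is needed — the traversal already visits them in the correct order. Hence the construction of a single $P_S$ costs $O(n')$, and summing over the $h$ splinegons yields $O(\sum n') = O(n)$ total time. Note there is no $O(h\lg n)$ overhead here because, unlike the general reduction, the concave-in construction is purely local to each splinegon and does not touch corridor or staircase structures.

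The main obstacle, and the only point that needs care, is justifying the $O(1)$ bound on locating the horizontal/vertical tangent points: one must argue that convexity of $S\text{-}seg_i$ guarantees these loci are simple (at most one extremal point in each axis direction, up to a degenerate flat arc which can be represented by its endpoints, which are already vertices of $S$ or adjacent to them), so that the $O(1)$-primitive-operation assumption genuinely applies and we are not forced to enumerate an unbounded set of tangent points on a single edge. Everything else — the linear-time boundary traversal and edge emission — is routine and follows directly from $\sum_{S} n' = n$.
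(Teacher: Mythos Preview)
Your proposal is correct and follows essentially the same approach as the paper's proof: both argue that each of the $n$ spline edges contributes $O(1)$ horizontal/vertical tangent points (found in $O(1)$ time by the assumed primitive operations), so $\calP$ has $O(n)$ vertices, and a single boundary traversal assembles each $P_S$ in linear time. Your discussion is more detailed in justifying why convexity of $S\text{-}seg_i$ limits the number of tangent points, but the underlying argument is the same.
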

\begin{proof}
To compute $\calP$ from $\calS$ we need to find a set $T$ comprising of points on edges of splinegons in $\calS$ such that every point $p \in T$ has either a horizontal or a vertical tangent to the spline on which $p$ resides. 
Because of our model of computation, we can find all the $k$ points of a spline $S$ that belong to $T$ in $O(k)$ time.
Since there are $n$ edges in $\calS$ and each edge has $O(1)$ points that belong to $T$, there are $O(n)$ vertices that define $\calP$.
Including the cost of traversal of each spline to compute polygons in $\calP$, algorithm takes $O(n)$ time to compute $\calP$.
\end{proof}

To transform  $SP_\calP(s, t)$ to $SP_\calS(s, t)$, a plane sweep algorithm is used to find the points of intersection of the $SP_\calP(s,t)$ with the splinegons in $\calS$. 
We sort the endpoints of the line segments in $SP_\calP(s, t)$ with respect to their $y$-coordinates. 
For every splinegon $S_i \in \calS$, let $S_i^{max}$ (resp. $S_i^{min}$) be a point on the boundary of $S_i$ that has the largest (resp. smallest) $y$-coordinate among all the points of $S_i$.
We sort all the points in the set $T$ comprising of $\bigcup_i (S_i^{max} \cup S_i^{min})$.
We use balanced binary search trees to respectively implement the event queue and the status structure needed for plane sweep.
The left to right order of the segments along the sweep line corresponds to the left to right order of the leaves in the balanced binary search tree (status structure).
We sweep the plane with a horizontal line from the point that has the maximum coordinate in $T$ to the point that has the minimum coordinate in $T$.
Let $p$ be an endpoint of the line segment $e \in SP_\calP(s,t)$.
When $p$ is encountered by the sweep line, we check if there is a splinegon, say $S$, immediately to the right or left of the edge $e$ in the status structure; if $S$ exists, we find the points of intersection of $e$ with the $S$ using the algorithm given in \cite{journals/algorithmica/DobkinSW88}.

\begin{lemma}
\label{lem:transf2}
Computing a shortest path between $s$ and $t$ amid splinegons in $\calS$ takes $O((h+k)\lg{n}+(h+k+k')\lg{h+k})$ time, where $h$ is the number of obstacles, $k$ is the number of line segments of $SP_\calP(s, t)$, and $k'$ is the number of intersection points of $SP_\calP(s, t)$ with the splinegon obstacles in $\calS$.
\end{lemma}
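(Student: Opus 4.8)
The plan is to read the statement as a running-time analysis of the plane sweep described above, followed by the path surgery that turns $SP_\calP(s, t)$ into $SP_\calS(s, t)$: detect every crossing of $SP_\calP(s, t)$ with a splinegon of $\calS$, then splice in the appropriate boundary arcs. I would bound separately (i) the set-up, (ii) the sweep, and (iii) the splicing, using Lemma~\ref{lem:dist} (together with the argument in the proof of Lemma~\ref{lem:ssegonce}) for correctness. The polygons $\calP$ are taken as already built (Lemma~\ref{lem:transf}), and so are the points $S_i^{max}, S_i^{min}$: each is a vertex or a horizontal-tangent point of $bd(S_i)$, all of which are visited during the $O(n)$-time construction of $\calP$, so $T=\bigcup_i(S_i^{max}\cup S_i^{min})$ comes for free. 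The event queue is then a balanced BST over the $O(h+k)$ points consisting of the $O(k)$ endpoints of the $k$ segments of $SP_\calP(s, t)$ and the $2h$ points of $T$, all sorted by $y$-coordinate; building it costs $O((h+k)\lg(h+k))$, with no factor $\lg n$.

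For the sweep itself I would argue as follows. Two kinds of events occur: the $O(h+k)$ scheduled events (segment endpoints and points of $T$) and the intersection events, of which there are exactly $k'$, so $O(h+k+k')$ events are processed in all. The status structure is a balanced BST holding the objects currently met by the sweep line in left-to-right order, with each active splinegon recorded by a single item and each crossed path segment by one item, hence of size $O(h+k)$ throughout; each event performs $O(1)$ structural updates at a cost of $O(\lg(h+k))$, for $O((h+k+k')\lg(h+k))$ overall. Whenever a path segment $e$ and a splinegon $S_i$ become newly adjacent in this order --- which happens $O(1)$ times per scheduled event, i.e. $O(h+k)$ times in total --- we compute the $O(1)$ crossings of the whole segment $e$ with $bd(S_i)$ using the routine of \cite{journals/algorithmica/DobkinSW88} in $O(\lg n_i)\subseteq O(\lg n)$ time; this contributes $O((h+k)\lg n)$, and because all crossings of $e$ with $S_i$ are found by this single query, an intersection event merely reorders the two items and spawns no new query. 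That no crossing is missed follows because $SP_\calP(s, t)\subseteq\calF(\calP)$ and each $P_{S_i}\subseteq S_i$ in the concave-in case: a path segment can meet $S_i$ only inside an open $S$-$seg_i$ region, which it enters across an arc of $bd(S_i)$, and at the instant such an entry begins that arc is the segment's immediate neighbour in the sweep order.

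Finally, with all $k'$ crossing points known, traverse $SP_\calP(s, t)$ and replace each maximal sub-path lying in an open $S$-$seg$ region by the corresponding $xy$-monotone boundary arc, exactly as in the proof of Lemma~\ref{lem:ssegonce}; this takes $O(k+k')$ time and, by Lemma~\ref{lem:dist}, produces a path of the same rectilinear length lying in $\calF(\calS)$, i.e. an $SP_\calS(s, t)$. Adding the three bounds gives $O((h+k)\lg n + (h+k+k')\lg(h+k))$. I expect the genuinely delicate step to be the two-fold correctness claim for the sweep: that representing each active splinegon by a single status item keeps the left-to-right order well defined (so that the notion of immediate neighbour is meaningful), and that this neighbour test never overlooks a splinegon actually crossed by $SP_\calP(s, t)$; the remaining cost accounting is routine.
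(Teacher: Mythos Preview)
Your proposal is correct and follows essentially the same plane-sweep analysis as the paper: the same $O((h+k)\lg(h+k))$ for building and sorting the event queue, the same $O((h+k)\lg n)$ for the at most $O(h+k)$ segment--splinegon intersection queries via \cite{journals/algorithmica/DobkinSW88} triggered at scheduled events, and the same $O((h+k+k')\lg(h+k))$ for status-structure maintenance over all events. You add two things the paper's own proof leaves implicit---the explicit $O(k+k')$ splicing step and the justification that no crossing is missed because $SP_\calP(s,t)$ can meet a splinegon only through an $S$-$seg$ region whose bounding arc is then the immediate neighbour---but these are refinements of the same argument, not a different route.
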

\begin{proof}
If there is a splinegon $S$ immediately to the left or right of a line segment $l$ of $SP_\calP(s,t)$, then we can find the intersection of $l$ with $S$ in $O(\lg{n'})$ time using the algorithm given in \cite{journals/algorithmica/DobkinSW88}, where $n'$ is the number of vertices of $S$.
Computing and sorting the event points take $O((h+k)\lg{(h+k)})$ time. 
We check whether a line segment of $SP_\calP(s, t)$ intersects a splinegon when the sweep line reaches endpoints of segments of $SP_\calP(s, t)$ or when it encounters points that belong to $T$; and the number of these event points is $O(h+k)$.
Since to check the points of intersection at each event point requires $O(\lg{n})$ time, the total time required to find the points of intersection at event points take $O((h+k)\lg{n})$ time. 
We update the status structure whenever the sweep line encounters either of these points: intersection points of line segments of $SP_\calP(s, t)$ with splinegons in $\calS$; points belonging to $T$; endpoints of line segments of $SP_\calP(s,t)$.
Considering that updating the status structure per one such event takes $O(\lg{(h+k)})$ time, the time required for all updates together is $O((h+k)\lg{n}(h+k+k')\lg{(h+k)})$.
\end{proof}

\begin{theorem}
Given a splinegonal domain $\calS$ comprising of $h$ pairwise disjoint simple concave-in splinegons together defined with $n$ vertices and two points $s, t \in \cal{F(S)}$, the reduction procedure to compute a rectilinear shortest path between $s$ and $t$ amid splinegons in $\calS$,
excluding the time to compute a rectilinear shortest path amid polygons in $\calP$,
takes $O(n+(h+k)\lg{n}+(h+k+k')\lg{h+k})$ time. 
Here, $\calP$ is the computed polygonal domain from $\calS$, $k$ is the number of line segments in the polygonal shortest path $SP_\calP(s, t)$ between $s$ and $t$ amid polygonal obstacles in $\calP$, and $k'$ is the number of points of intersections of $SP_\calP(s, t)$ with splinegons in $\calS$.
\end{theorem}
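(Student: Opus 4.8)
The plan is to assemble the theorem from the lemmas already established in this subsection, since the reduction decomposes naturally into two phases: (a) constructing the polygonal domain $\calP$ from $\calS$, and (b) transforming a rectilinear shortest path $SP_\calP(s, t)$ amid the obstacles of $\calP$ into a rectilinear shortest path $SP_\calS(s, t)$ amid the splinegons of $\calS$. For phase (a), I would recall that for each splinegon $S \in \calS$ with $n'$ vertices we build the polygon $P_S$ whose vertices are the vertices of $S$ together with the $O(1)$ points per splinegon edge at which that edge has a horizontal or vertical tangent; Lemma~\ref{lem:transf} bounds the total cost of this construction by $O(n)$, Lemma~\ref{lem:simp} guarantees each $P_S$ so produced is simple, so $\calP$ is a valid polygonal domain, and Lemma~\ref{lem:dist} guarantees that the rectilinear distance is preserved.

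For phase (b), I would invoke the plane-sweep procedure whose running time is analyzed in Lemma~\ref{lem:transf2}. The points to recapitulate are that the event points are the endpoints of the $k$ segments of $SP_\calP(s, t)$ together with the extreme points $S_i^{max}, S_i^{min}$ of the $h$ splinegons, giving $O(h+k)$ initial events whose computation and sorting cost $O((h+k)\lg(h+k))$; each query of the status structure to detect a splinegon immediately to the left or right of a segment and to compute their intersection costs $O(\lg{n})$ via the primitive of \cite{journals/algorithmica/DobkinSW88}; and each update of the event queue or status structure costs $O(\lg(h+k))$. Since there are $O(h+k)$ query events and $O(h+k+k')$ update events (the additional $k'$ arising from the intersection points discovered during the sweep), Lemma~\ref{lem:transf2} yields the bound $O((h+k)\lg{n} + (h+k+k')\lg(h+k))$ for this phase.

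Finally I would verify that the curve output by phase (b) is genuinely a rectilinear shortest path in $\calF(\calS)$ and not merely a curve of the right length: by Lemma~\ref{lem:ssegonce} there is a rectilinear shortest path amid $\calP$ avoiding every open $S$-$seg$ region, and replacing each maximal sub-path of $SP_\calP(s, t)$ that penetrates an open $S$-$seg_i$ region by the corresponding $xy$-monotone section of the bounding spline neither increases the length nor leaves $\calF(\calS)$, so Lemma~\ref{lem:cvxpolyspline} together with Lemma~\ref{lem:dist} certifies that the result is an $SP_\calS(s, t)$. Adding the two phase costs gives $O(n) + O((h+k)\lg{n} + (h+k+k')\lg(h+k)) = O(n+(h+k)\lg{n}+(h+k+k')\lg{h+k})$, as claimed. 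I do not expect a substantial obstacle at the level of this theorem, since every geometric ingredient is discharged by Lemmas~\ref{lem:simp}, \ref{lem:dist}, \ref{lem:ssegonce} and \ref{lem:cvxpolyspline} and the only quantitative work is the amortized accounting of the sweep, already done in Lemma~\ref{lem:transf2}; the one point I would double-check is that each of the $k'$ intersection points and each point of $T$ triggers only a constant number of status-structure updates, so that the update-event count is indeed $O(h+k+k')$, and that the query ``splinegon immediately to the left or right of a segment $e$'' is well defined under the standing assumptions that no splinegon edge is axis-parallel and that the carrier polygons are pairwise non-crossing.
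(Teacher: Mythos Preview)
Your proposal is correct and follows essentially the same approach as the paper: the paper's own (suppressed) proof simply cites Lemma~\ref{lem:cvxpolyspline} and Lemma~\ref{lem:dist} for correctness and Lemma~\ref{lem:transf} and Lemma~\ref{lem:transf2} for the time bound, which is exactly the decomposition into phases (a) and (b) that you carry out. Your write-up is in fact more thorough, since you additionally invoke Lemma~\ref{lem:simp} and Lemma~\ref{lem:ssegonce} and spell out why the sweep's event count is $O(h+k+k')$.
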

\ignore {
\begin{proof}
The correctness of the shortest path between $s$ and $t$ amid $\calS$ follows from Lemma~\ref{lem:cvxpolyspline} and Lemma~\ref{lem:dist}
The time complexity follows from Lemma~\ref{lem:transf} and Lemma~\ref{lem:transf2}. 
\end{proof}
}

\subsection{Simple splinegon obstacles}
\label{subsect:algoslinegon}

We first describe the algorithm when the decomposition of $\cal{F(S)}$ has only open corridors. 
Later we extend this algorithm to handle the closed corridors.

\begin{figure}[h]
\begin{center}
\includegraphics[width=0.65\textwidth]{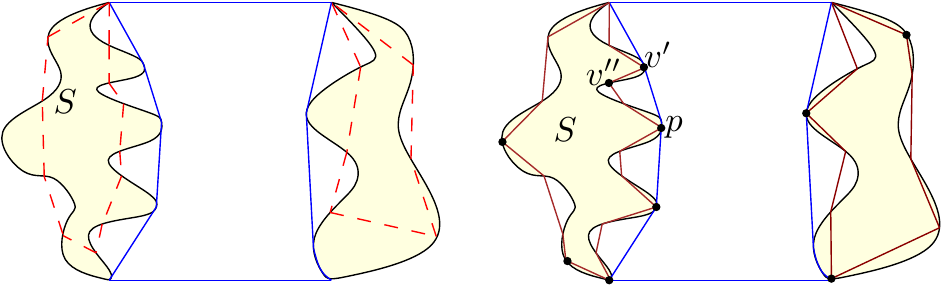}
\caption{Illustrating two splinegons whose sections of boundary belong to an open corridor: splinegons are in black, carrier polygon is red (left), hourglass in blue, and polygons belonging to $\calP$ are in brown (right).}
\label{fig:open}
\end{center}
\end{figure}

\ignore {
\begin{wrapfigure}{r}{2cm}
    \includegraphics[width=2cm]{figs/proj.pdf}
    \caption{Illustrating projection of $v\in V_{ortho}$ on a convex chain.}
    \label{fig:proj}
  \end{wrapfigure}
}

In the following, we define a set $V_\calP$ of points; these are used in defining the polygonal domain $\calP$ corresponding to $\calS$.
For every edge $s$ of a splinegon $S \in \calS$, the endpoints of $s$ belong to $V_\calP$; further if a tangent to $s$ at a point $p \in S$ is either horizontal or vertical and $p$ does not lie inside the convex hull of the carrier polygon of $S$, then $p$ is included in $V_\calP$. 
Also, for every side $s'$ of every open hourglass $H_C$ in the splinegonal domain $\calS$, the endpoints of $s$ belong to $V_\calP$. 
Further, for every vertex $v \in V_{ortho}$ in $\calS$ (see Section~\ref{sect:staircasestr} for the definition of $V_{ortho}$), the horizontal and vertical projections of $v$ onto sides of hourglasses are added to $\calP$.
For every splinegon $S \in \calS$, for any two vertices $v', v'' \in V_\calP$ that occur successively while traversing the boundary of $S$, we add an edge between $v'$ and $v''$ to obtain $P_S \in \calP$ corresponding to $S$. 
Further, we introduce points $s$ and $t$ in $\cal{F(P)}$ at the same coordinate locations as they are in $\cal{F(S)}$. 

Let $\calS'$ be the set comprising of convex hulls corresponding to each of the carrier polygons of splinegons in $\calS$.
Since carrier polygons are simple polygons and since no point of $V_\calP$ belongs to the interior of any of the convex hulls in $\calS'$, every polygon in $\calP$ is guaranteed to be a simple polygon.
(Refer to Fig.~\ref{fig:open}.)

\begin{lemma}
\label{lem:simppoly2}
Every polygon $P \in \calP$ is a simple polygon.
\end{lemma}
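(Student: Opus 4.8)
The plan is to fix a splinegon $S \in \calS$ and show that the closed polygonal curve $bd(P_S)$ is simple. By construction $P_S$ is inscribed in $bd(S)$: its vertices are exactly the points of $V_\calP$ that lie on $bd(S)$, listed in the cyclic order in which they are met while traversing $bd(S)$; so it suffices to prove (a) two consecutive edges of $P_S$ overlap only at their shared vertex, and (b) two non-adjacent edges of $P_S$ are disjoint. The first step is a structural observation. Since every vertex of $S$ belongs to $V_\calP$, between any two points of $bd(S)$ lying on two distinct splines there is a vertex of $S$, hence a vertex of $P_S$; therefore every edge of $P_S$ joins two points of one spline $s_i$. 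Moreover, since the region $S$-$seg_i$ bounded by $s_i$ and the carrier edge $e_i$ is convex and $s_i$ is part of its boundary, the chord between two points of $s_i$ is contained in $\overline{S\mbox{-}seg_i}$. This yields (a) at once: two consecutive edges of $P_S$ meet either at a vertex of $S$ -- where they are chords of two distinct splines and lie in $\overline{S\mbox{-}seg_i}$ and $\overline{S\mbox{-}seg_{i+1}}$, two regions that share only that vertex -- or at a common interior point of a single spline $s_i$, where the two chords leave that point on the same side of the convex curve $s_i \cup e_i$ and hence overlap only there.

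For (b), let $\epsilon \subseteq \overline{S\mbox{-}seg_i}$ and $\epsilon' \subseteq \overline{S\mbox{-}seg_j}$ be non-adjacent edges of $P_S$. If $i = j$, the endpoint pairs of $\epsilon$ and $\epsilon'$ are two disjoint sub-chords of $s_i$; as chords of the convex curve $s_i \cup e_i$ whose endpoints do not interleave they do not cross, and having no common endpoint they are disjoint. If $i \neq j$, I would reduce the claim to showing that $\overline{S\mbox{-}seg_i} \cap \overline{S\mbox{-}seg_j}$ is empty or a single vertex common to $e_i$ and $e_j$ -- such a vertex is an endpoint of $s_i$ and of both edges of $P_S$ incident to it, so it cannot lie in the relative interior of $\epsilon$ or of $\epsilon'$. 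To establish this disjointness I would use that the carrier polygon of $S$ is simple (so $e_i$ and $e_j$ are distinct edges of a simple polygon, and $\overline{S\mbox{-}seg_j}$ meets the closed carrier polygon only along $e_j$) together with the fact that $bd(S)$ is a simple closed curve (so $s_i \cap s_j$ is empty when $e_i,e_j$ are non-adjacent and is a single vertex of $S$ otherwise): if $\overline{S\mbox{-}seg_i}$ and $\overline{S\mbox{-}seg_j}$ had overlapping interiors then, both being convex and neither contained in the other -- each contains its own carrier edge, which is not contained in the other region -- the boundary of $\overline{S\mbox{-}seg_i}$ would both enter and leave $\overline{S\mbox{-}seg_j}$; since $e_i$ meets $\overline{S\mbox{-}seg_j}$ only within $e_i \cap e_j$, such a transition point must lie on $s_i$, and symmetrically on $s_j$, producing a point of $s_i \cap s_j$ other than the shared carrier vertex, contradicting simplicity of $bd(S)$. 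The same accounting rules out any one-point contact of the two regions away from a shared vertex of $e_i$ and $e_j$.

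The step I expect to be the main obstacle is the $i \neq j$ case of (b): one must split into sub-cases according to whether each of $e_i$ and $e_j$ is concave-in or concave-out -- so that $\overline{S\mbox{-}seg_i}$ lies inside or outside the carrier polygon -- and in the concave-out/concave-out sub-case carry out the ``two overlapping convex regions meet along their boundaries at least twice'' argument and push those boundary meetings back onto the splines $s_i$ and $s_j$; this is where simplicity of both $bd(S)$ and of the carrier polygon of $S$ is essential. The remaining pieces -- the structural observation, part (a), and the $i = j$ sub-case -- are routine and rely only on convexity of the $S$-$seg$ regions and on the non-interleaving of chord endpoints along a convex curve. Finally, the companion assertion in the paragraph preceding the statement, that the polygons of $\calP$ are moreover pairwise disjoint, is where the clause in the definition of $V_\calP$ that keeps horizontal and vertical tangent vertices outside the convex hulls of the carrier polygons gets used, via the family $\calS'$ of those convex hulls.
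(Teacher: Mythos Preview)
The paper does not supply a separate proof: its entire justification is the sentence immediately preceding the lemma, asserting simplicity from the facts that carrier polygons are simple and that ``no point of $V_\calP$ belongs to the interior of any of the convex hulls in $\calS'$.'' Your route is genuinely different and considerably more explicit: you localize every edge of $P_S$ inside a single closed region $\overline{S\text{-}seg_i}$ (using that all splinegon vertices lie in $V_\calP$, so consecutive $V_\calP$-points sit on one spline) and then reduce simplicity to near-disjointness of those convex regions. This buys you a clear case structure and isolates the one delicate configuration (two concave-out splines hanging off a non-convex carrier), whereas the paper's convex-hull sentence is a terse assertion whose literal reading is already problematic---reflex vertices of a non-convex carrier polygon \emph{do} lie in the interior of its convex hull and \emph{do} belong to $V_\calP$.

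One step in your $i\neq j$ argument needs more care. You claim $e_i$ meets $\overline{S\text{-}seg_j}$ only within $e_i\cap e_j$, justified by ``$\overline{S\text{-}seg_j}$ meets the closed carrier polygon only along $e_j$.'' But the splinegon axioms do not forbid a spline $s_j$ from crossing a \emph{different} carrier edge $e_i$: only $bd(S)$ and the carrier boundary are individually simple, not their union. If $s_j$ crosses $e_i$ then $\overline{S\text{-}seg_j}$ can spill across $e_i$, and the boundary-transition point you want to pin on $s_i\cap s_j$ may instead lie on $e_i\cap s_j$, so your contradiction with the simplicity of $bd(S)$ does not fire. Your overall engine (two overlapping convex regions, neither containing the other, must have boundaries meeting at least twice) is still right; you just have to chase those boundary meetings through all four pairings $s_i/s_j$, $s_i/e_j$, $e_i/s_j$, $e_i/e_j$ rather than ruling out the mixed ones in advance. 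Alternatively, bypass region-disjointness entirely and argue at the level of $bd(S)$: the arc of $s_i$ between $v_1,v_2$ together with the chord $\epsilon$ bounds a convex disc whose boundary meets $bd(S)$ only along that arc, so the rest of $bd(S)$---in particular $v_3,v_4$---lies on one side of $\epsilon$, which is what you need.
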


\begin{lemma}
\label{lem:graph}
If $Q \in \cal{F(P)}$ is a rectilinear shortest path between $s$ and $t$ amid polygons in $\calP$ computed using the algorithm given in \cite{journals/comgeo/InkuluK09a}, then $Q$ is a shortest path between $s$ and $t$ amid splinegons in $\calS$.
\end{lemma}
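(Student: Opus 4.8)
The plan is to argue that the polygonal domain $\calP$ has been constructed so that the graph $G_{\calP}$ produced by the algorithm of \cite{journals/comgeo/InkuluK09a} on $\calP$ coincides (as a weighted graph) with the graph $G_{\calS}$ built from the staircase and visibility structures of $\calS$ described in Sections~\ref{sect:corrhourglasses} and~\ref{sect:staircasestr}. Recall that $\calP$ was defined by placing on $bd(S)$, for every splinegon $S$, exactly (a) the vertices of $S$, (b) the points of horizontal/vertical tangency that lie outside the convex hull of the carrier polygon, (c) the endpoints of sides of open hourglasses, and (d) the orthogonal projections of every $v \in V_{\mathrm{ortho}}$ onto hourglass sides. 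These are precisely the geometric features that the rectilinear-shortest-path machinery of \cite{journals/comgeo/InkuluK09a} relies upon; the first step is therefore to verify that each node of $G_{\calP}$ — a vertex of $\calP$, a turn point of a staircase, or an orthogonal projection point — is realized at a point of $bd(S)$ that is also a node of $G_{\calS}$, and conversely, so that the two node sets are in bijection.

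Next I would match the edges and their weights. For edges of type $E_1$ (orthogonal projections) and $E_{\mathrm{tmp}}$ (staircase edges of the form $(p',q')$ with $q'\in S_i(p')$), both endpoints lie in $\calF(\calP)$ and the connecting segment is a horizontal or vertical chord; since $\calF(\calP)\subseteq\calF(\calS)$ and (by the construction of $\calP$, which only ``fills in'' the convex pockets bounded by a spline and the corresponding hull edge) no such axis-parallel chord used by the algorithm is blocked in $\calP$ that was free in $\calS$ or vice versa, these edges and their $L_1$ weights are identical in the two graphs. The one nontrivial family is $E_{\mathrm{occ}}$: edges between two consecutive (in $V_{\mathrm{vistmp}}$) points on a corridor convex chain, whose weight is the rectilinear length \emph{along the chain}. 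Here I would invoke Lemma~\ref{lem:geoment} (the staircase structure between adjacent points of $S_i(p)$ consists of a horizontal segment, a monotone section of a convex chain, and a vertical segment) together with the fact that every horizontal/vertical tangency point outside the hull has been inserted as a vertex of $P_S$: this guarantees that each relevant section of a convex chain of $\calS$ is $xy$-monotone, so its rectilinear length equals that of the polygonal section of $bd(P_S)$ replacing it — exactly the reasoning already used in Lemma~\ref{lem:ssegonce}. Hence the $E_{\mathrm{occ}}$ weights agree as well.

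With $G_{\calS} = G_{\calP}$ established, Theorem~\ref{thm:vistempcorr} says a shortest $s$-$t$ path in $G_{\calS}$ is a rectilinear shortest path in $\calF(\calS)$, and \cite{journals/comgeo/InkuluK09a} says a shortest $s$-$t$ path in $G_{\calP}$ is a rectilinear shortest path in $\calF(\calP)$; since the algorithm of \cite{journals/comgeo/InkuluK09a} returns such a path $Q$ and that path is simultaneously a shortest path in $G_{\calS}$, $Q$ is a rectilinear shortest path amid splinegons in $\calS$. It remains only to check that $Q$ actually lies in $\calF(\calS)$ (not merely in $\calF(\calP)$): since the only points of $\calF(\calS)\setminus\calF(\calP)$ are the interiors of the convex pockets between splines and hull edges, and $Q$ is built from edges of $G_{\calS}$, none of which enters such a pocket, we get $Q\subseteq\calF(\calS)$, and then $dist_\calS(s,t)=dist_\calP(s,t)$ follows as in Lemma~\ref{lem:dist}.

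The main obstacle I anticipate is the edge-for-edge identification of $G_{\calP}$ with $G_{\calS}$ — in particular, arguing carefully that the node-placement rules for $\calP$ force the \emph{same} visibility relations (which pairs of points see each other, which orthogonal ray hits which chain first) to hold amid the polygons $P_S$ as hold amid the splinegons $S$. This requires using both that $P_S\subseteq S$ (so $\calP$ never opens up a shortcut) and that the inserted tangency/projection vertices prevent any spline pocket from ``cutting a corner'' that the polygon would traverse straight; the convex-pocket argument of Lemma~\ref{lem:simp}/Lemma~\ref{lem:simppoly2} is the right tool, but it must be applied uniformly to every feature the algorithm of \cite{journals/comgeo/InkuluK09a} queries.
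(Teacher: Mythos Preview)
Your overall strategy---show that the graph $G_\calP$ built by \cite{journals/comgeo/InkuluK09a} on $\calP$ coincides, as a weighted graph, with the graph $G_\calS$ of Section~\ref{sect:staircasestr}, and then invoke Theorem~\ref{thm:vistempcorr}---is exactly the route the paper takes. The node-set argument and the $E_{\mathrm{occ}}$ argument (via $xy$-monotonicity of the relevant chain sections) are also essentially what the paper does.

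There is, however, a concrete gap in how you handle $E_{\mathrm{tmp}}$. You lump $E_{\mathrm{tmp}}$ together with $E_1$ and assert that ``the connecting segment is a horizontal or vertical chord,'' so that an axis-parallel unobstructedness argument suffices. But an $E_{\mathrm{tmp}}$ edge joins $p'\in V_{\mathrm{ortho}}$ to $q'\in S_i(p')$ by a \emph{single oblique} segment; nothing about it is axis-parallel. What must actually be shown is a visibility-preservation statement: if $q\in S_i(p)$ in $\calS$ then $q\in S_i'(p)$ in $\calP$ (and conversely), i.e.\ the segment $pq$ is blocked by a polygonal hourglass side in $\calP$ if and only if it is blocked by the corresponding splinegonal side in $\calS$. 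The paper argues this by observing that each polygonal convex chain is ``bounded by'' the corresponding splinegonal one; your axis-parallel shortcut does not cover it.

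A second, related issue is that your containment claims are inconsistent and, for arbitrary splinegons, not available. In one paragraph you use $\calF(\calP)\subseteq\calF(\calS)$, and in another you use $P_S\subseteq S$; these are opposite inclusions, and for the general (not concave-in) case of Section~\ref{subsect:algoslinegon} neither global containment holds---concave-out edges put pockets in $\calF(\calP)\setminus\calF(\calS)$, while other configurations can put pockets in $\calF(\calS)\setminus\calF(\calP)$. Consequently your final step (``the only points of $\calF(\calS)\setminus\calF(\calP)$ are convex pockets, and $Q$ avoids them, hence $Q\subseteq\calF(\calS)$'') is aimed at the wrong set difference and does not establish $Q\subseteq\calF(\calS)$. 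The paper sidesteps any global containment by working entirely at the level of hourglass sides and the associated convex chains; once $G_\calP=G_\calS$ is established edge-by-edge, Theorem~\ref{thm:vistempcorr} gives the conclusion directly, without a separate $Q\subseteq\calF(\calS)$ check.
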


\ignore{
\begin{figure}[h]
\centering

\includegraphics[width=0.43\linewidth]{figs/nonconvex.pdf}

\caption{Illustrating corridor convex chains (in red) of a splinegon (left) and its corresponding polygon (right).  The chain on the polygon (right) is not a path of shortest length between its endpoints due to $v$.}
\label{fig:nonconvex}
\end{figure}
}

\begin{proof}
Consider the graph $G_{\calP}$ from which a rectilinear shortest path amid simple polygons in $\calP$ between $s$ and $t$ is computed in \cite{journals/comgeo/InkuluK09a}. 
Let $G_{\calS}$ be the graph corresponding to $\calS$, as defined in Section~\ref{sect:staircasestr}.
We prove that $G_{\calP}$ is same as $G_{\calS}$.

The definitions of $V_{ortho}, V_1$ from Section~\ref{sect:staircasestr} are used in the following.
Analogously, we define $V'_{ortho}, V'_1$ for $\calP$ which respectively correspond to $V_{ortho}, V_1$.
Let $V = V_{ortho} \cup V_1$ be the vertex set of $G_{\calS}$ and $V' = V'_{ortho} \cup V'_1$ be the vertex set of $G_{\calP}$. 
We prove that a vertex belongs to $V$ if and only if it belongs to $V'$.
Suppose $v \in V'_{ortho}$ but $v$ does not belong to $V_{ortho}$. 
Then it must be the case that $v$ is hidden by a convex chain $ab$ in splinegonal domain.
Since $a$ and $b$ are endpoints of an hourglass side in the decomposition of $\cal{F(S)}$, these two are vertices of polygons in $\calP$. 
Suppose $v$ is an endpoint of an hourglass side in $\calP$. 
Then this would lead to a contradiction as we could extend the convex chain to $a$ or $b$ in $\calP$.
Suppose $v$ is residing on an hourglass side $ab$ in $\calP$ but not an endpoint of hourglass.
Since the hourglass side is the shortest path between $a$ and $b$ in $\calP$ and since every vertex of the chain lies on the boundary of a splinegon, $v$ being hidden by the convex chain $ab$ would contradict the fact that the chain from $a$ to $b$ is the shortest path between $a$ and $b$ in $\calP$. 
Thus $v$ lies on a convex chain in $\calS$ and $v$ does not lie inside the convex hull of the carrier polygon as $v$ is part of the shortest  path between $a$ and $b$.
Since there is a horizontal (resp. vertical) tangent to $v$ in $\calP$, there exists a horizontal (resp. vertical) tangent at $v$ to a splinegon in $\calS$. 
This contradicts our assumption that $v$ does not belong to $V_{ortho}$, therefore if $v\in V'_{ortho}$ then $v \in V_{ortho}$.
Analogously we can prove the converse.

By the way we defined $V'_1$, it is immediate to note that a vertex $v \in V_1$ if and only if $v \in V'_1$. \par
Now we show that for every edge $e' \in G_{\calP}$ we introduce a corresponding edge $e \in G_{\calS}$ such that the weights of the corresponding edges are same.
Let $e$ be an edge in $G_s$ of length $l$. 
Also let $p$ and $q$ be the endpoints of $e$. 
We prove that there is a path of length $l$ between $p$ and $q$ amid polygonal obstacles in $\calP$ as well.
The definitions of $E_1$, $E_{occ}$ and $E_{tmp}$ are given in Section~\ref{sect:staircasestr}.

\begin{itemize}
\item[-] 
Suppose $e \in E_1$.
Here $l$ is the rectilinear distance between $p\in V_{ortho}$ and $q \in V_1$.
Since we had proven that if a vertex belongs to $V_{ortho}$ (resp. $V_1$) in $\calS$ then the vertex also belongs to $V'_{ortho}$ (resp. $V'_1$) in $\calP$. Thus the rectilinear distance between $p$ and $q$ will be same in both $\cal{F(P)}$ as well as in $\cal{F(S)}$.

\item[-] 
Suppose $e \in E_{occ}$. 
Here $l$ is the rectilinear distance along the (splinegonal) convex chain between $p$ and $q$, where $p$ and $q$ are the consecutive points on the side of an hourglass obtained due to the decomposition of $\cal{F(S)}$.
Since every section of convex chain in the splinegon domain is $xy$ monotone, the rectilinear distance between $p$ and $q$ in $\cal{F(P)}$ equals to the rectilinear distance between $p$ and $q$ in $\cal{F(S)}$.

\item[-]
Suppose $e \in E_{tmp}$.
Here $l$ is the rectilinear distance between $p \in V_{ortho}$ and let $q \in S_1(p)$.
We prove that if $q\in S_1(p)$ in splinegonal domain then $q\in S_1'(p)$ in polygonal domain as well.
Suppose $q$ does not belong to $S_1'(p)$.
Then $q$ is not visible from $p$ amid polygons in $\calP$.
This means that a convex chain of an open hourglass of the decomposition of $\cal{F(S)}$ intersects the line segment $pq$.
However, since the convex chain in polygonal domain is always bounded by a convex chain in $\calS$, this would imply $pq$ is intersected by a convex chain in splinegonal domain as well.
\end{itemize}

Therefore, if $Q$ is a shortest $s$-$t$  path obtained from $G_{\calP}$ then it is also the shortest $s$-$t$ path in $G_{\calS}$.
This together with Theorem~\ref{thm:vistempcorr} lead to conclude that $Q$ is also a shortest path amid splinegons in $\calS$. 
\end{proof}

To find the horizontal and vertical projections of points in $V_{ortho}$, we use the plane sweep algorithm from \cite{journals/comgeo/InkuluK09a} extended to splinegons.
Essentially, we sweep a vertical line from left to right to find the horizontal rightward projections of every $v \in V_{ortho}$.
The status of the vertical sweep line is maintained as a set of points in $V_{ortho}$ that lie on the sweep line, sorted by their $y$-coordinates.
Let $p$ be the first point of a convex chain $CC$ struck by sweep line and let
 $r$ be a point in the sweep-line status structure at the time $p$ is encountered by the sweep line.
If $p$ projects onto $CC$ at $p'$ then $p'$ is a projection of $p$. 
After finding $p'$, we remove $p$ from the sweep-line status structure.
Analogously, projections of points in $V_{ortho}$ are determined.

\begin{lemma}
\label{lem:s2p}
When the free space of the given splinegonal domain $\calS$ is partitioned into open corridors, computing a polygonal domain $\calP$ from $\calS$, so that the $dist_\calP(s, t)$ equals to the $dist_\calS(s, t)$ for two given points $s, t \in \cal{F(S)}$, takes $O(n + h \lg{n})$ time.
\end{lemma}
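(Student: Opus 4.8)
The plan is to account separately for the cost of each set of points added to $V_\calP$ and then bound the cost of assembling the polygons $P_S$. First I would argue that listing the endpoints of all splinegon edges and all the horizontal/vertical tangent points on splinegon edges takes $O(n)$ time: there are $n$ edges, each of $O(1)$ combinatorial complexity, and by our model of computation the (constant number of) points with axis-parallel tangent on each edge can be found in $O(1)$ time; the extra test of whether such a point lies inside the convex hull of its carrier polygon is also $O(1)$ per point once the convex hull is available, and all the convex hulls of the $h$ carrier polygons can be computed in $O(n)$ total time (each carrier polygon is simple, so its convex hull is obtained in time linear in its size). Next, the endpoints of the sides of all open hourglasses: the bounded-degree decomposition $BDD(\cal{F(S)})$, the graph $G^3$, the $O(h)$ corridors, and the hourglasses (with their convex sides) are all produced by the algorithm of Chen and Wang~\cite{journals/talg/ChenW15} in $O(n + h\lg n)$ time, and there are only $O(h)$ hourglass sides, so collecting their endpoints is subsumed in this bound.

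The remaining—and the only genuinely nontrivial—contribution is the set of horizontal and vertical projections of the vertices in $V_{ortho}$ onto the sides of hourglasses. Here I would invoke the plane-sweep procedure described just before the lemma statement, which is the splinegonal adaptation of the sweep from~\cite{journals/comgeo/InkuluK09a}: sweep a vertical line left to right, keep in the status structure the points of $V_{ortho}$ currently ``active'', sorted by $y$-coordinate, and when the sweep line first strikes a convex chain $CC$, project onto $CC$ those status-structure points that $CC$ occludes, then delete them. The key counting facts are that $|V_{ortho}| = O(n)$ (each convex chain is part of a splinegon boundary, and a convex chain contributes $O(1)$ endpoints plus $O(1)$ axis-parallel-tangent points, over $O(h)$ chains, but more simply $V_{ortho}$ is bounded by the number of splinegon vertices plus axis-parallel tangent points, which is $O(n)$), that there are $O(h)$ convex chains, and that each point of $V_{ortho}$ is inserted and deleted from the status structure exactly once. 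Each event—an endpoint of a convex chain, or an axis-parallel extreme of a convex chain, or a projection—costs $O(\lg n)$ for the balanced-BST operations plus $O(1)$ (by our model) for the primitive ``first intersection of a ray with a convex chain'' query against the chain currently met; since the total number of events is $O(n + h) = O(n)$ and the sort of the event abscissae is $O(n \lg n)$, the sweep runs in $O(n \lg n)$ time. (Running the analogous three sweeps for the other axis directions only multiplies by a constant.)

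Finally, once $V_\calP$ is known, forming each polygon $P_S$ amounts to walking the boundary of $S$ once and joining consecutive members of $V_\calP$ by edges; over all splinegons this is a single traversal of the $n$ edges of $\calS$ together with the $O(n)$ inserted points, hence $O(n)$ time. Adding $s$ and $t$ at their coordinate locations is $O(1)$. Summing the pieces—$O(n)$ for the tangent points and carrier-polygon hulls and polygon assembly, $O(n + h\lg n)$ for the corridor/hourglass structures of~\cite{journals/talg/ChenW15}, and $O(n\lg n)$ for the projection sweeps—gives a total of $O(n\lg n + h\lg n) = O(n + h\lg n)$, since $h \le n$ makes $n\lg n$ itself $O(n\lg n)$; but in fact the sweep cost can be refined to $O((n + h)\lg n)$ only in the parts touching the status structure, whose size is $O(h)$ at any time because at most $O(h)$ convex-chain-related events are simultaneously active—so a sharper bookkeeping yields $O(n + h\lg n)$ as stated. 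I expect the main obstacle to be exactly this last refinement: showing that the sweep does not cost $\Theta(n\lg n)$ but only $O(n + h\lg n)$, which requires observing that the $O(n)$ points of $V_{ortho}$ interact with the status structure in a way governed by the $O(h)$ convex chains (each chain occludes a contiguous $y$-range, so the batched deletions are $O(h)$ in number with total size $O(n)$, handled by split/merge on the BST in $O(h\lg n)$ rather than $O(n\lg n)$ time), together with the fact that computing $\calP$ does not need the full staircase/visibility graph, only the projections.
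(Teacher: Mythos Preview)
You overcomplicate the sweep analysis because you discard the bound you actually need. In your own parenthetical you correctly observe that each of the $O(h)$ corridor convex chains contributes $O(1)$ endpoints and $O(1)$ axis-parallel tangent points, hence $|V_{\mathrm{ortho}}| = O(h)$; you then abandon this for the weaker estimate $|V_{\mathrm{ortho}}| = O(n)$ and spend the remainder of the argument trying to recover. The paper's proof simply uses $|V_{\mathrm{ortho}}| = O(h)$: there are $O(h)$ sweep events, each inserted into and deleted from a status structure of size $O(h)$ at cost $O(\lg h)$, and each projection onto a convex chain is located by binary search in $O(\lg n)$. That gives $O(h\lg n)$ for the sweep and $O(n)$ for the constant number of points added per splinegon edge, with no split/merge machinery required.

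Your attempted refinement does not actually close the gap. Even granting that deletions can be batched into $O(h)$ contiguous $y$-ranges handled by split/merge in $O(h\lg n)$, the \emph{insertions} of $\Theta(n)$ points of $V_{\mathrm{ortho}}$ into a balanced BST still cost $\Theta(n\lg n)$ under your assumption $|V_{\mathrm{ortho}}|=O(n)$, as does the initial sort of event abscissae. Consequently the equality you write, $O(n\lg n + h\lg n) = O(n + h\lg n)$, is false in general, and the split/merge trick does not repair it. The one-line fix is to retain the $O(h)$ bound on $|V_{\mathrm{ortho}}|$ that you had already observed; everything downstream then matches the paper's argument.
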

\begin{proof}
Since there are $n$ edges in $\calS$ and we are adding a constant number of points to each edge, computing $\calP$ from $\calS$ takes $O(n)$ time. 
Each vertex in $V_{ortho}$ is inserted into (resp. deleted from) sweep line data structures' only once, together taking $O(h\lg{h})$ time.
With binary search, intersection of a horizontal (resp. vertical) line from a point with a convex chain can be found in $O(\lg{n})$ time.
Hence all the points of projections can be computed as stated.
\end{proof}

\begin{figure}[h]
\begin{center}
\includegraphics[width=0.4\textwidth]{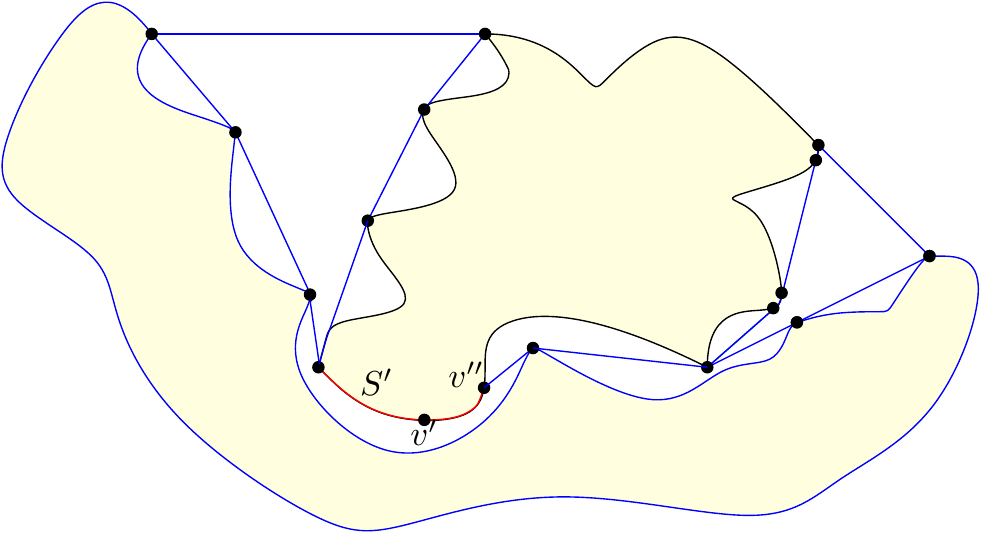}
\caption{Illustrating sections of boundaries of two splinegons participating in a closed corridor and their carrier polygon (red) and sections of polygons computed (brown).}
\label{fig:closed}
\end{center}
\end{figure}

Now we extend this algorithm to handle closed corridors.
For each side of every funnel, very similar to sides of open hourglasses, we introduce points into $V_\calP$; this include projections of points $V_{ortho}$ onto sides of funnels.
Let $Q$ be the rectilinear shortest path between apices of a closed corridor in $\calS$.
For every contiguous maximal section $S'$ of every spline that belong to $Q$, we add the endpoints of $S'$ to $V_{\calP}$.
For every two vertices $v', v' \in V_{\calP}$ if they occur successively along the boundary of a splinegon, then $v'v''$ is an edge of a polygon in $\calP$ (refer Fig.~\ref{fig:closed}).
Note that for any two splinegons that participate in a closed corridor, their corresponding polygons in $\calP$ are guaranteed to be disjoint.

\begin{lemma}
\label{lem:length}
If $a$ and $b$ are the apices of a closed corridor and the shortest distance between them is $d$ in $\cal{F(S)}$ then the shortest distance between $a$ and $b$ in $\cal{F(P)}$ is $d$.
\end{lemma}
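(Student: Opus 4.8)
The plan is to prove $dist_\calP(a,b)=d$ by establishing the two inequalities $dist_\calP(a,b)\le d$ and $dist_\calP(a,b)\ge d$ separately. The single fact that drives both is that every maximal boundary arc of a splinegon that is replaced, when forming the polygons of $\calP$, by an edge or a short polyline of some $P_S$ is $xy$-monotone; consequently such an arc and any chord between two of its points have the same $L_1$-length, so straightening an $xy$-monotone boundary arc to the corresponding chain of $P_S$, or conversely nudging a path off a chord of $P_S$ onto the arc it replaced, never increases the $L_1$-length.

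First I would fix the relevant objects. Let $Q$ be a rectilinear shortest path between the apices $a$ and $b$ in $\cal{F(S)}$; since $C$ is a closed corridor, a shortest such path lies inside $C$, and we take $Q$ to be the one fixed in constructing $\calP$ in this subsection, so $|Q|=d$. By the structure of rectilinear shortest paths in a splinegonal domain (Section~\ref{sect:staircasestr}, Theorem~\ref{thm:vistempcorr}), $Q$ is a concatenation of axis-parallel segments and maximal spline sections, each spline section lying on a side of a funnel of $C$ or on the corridor path of $C$, and each such section is $xy$-monotone (as is every section of a corridor convex chain). By construction of $\calP$, the endpoints of each such spline section, the projections of $V_{ortho}$ onto the funnel sides, and the horizontal/vertical tangent points on the relevant boundary portions are all in $V_\calP$, and consecutive points of $V_\calP$ along $bd(S)$ are joined by a chord to form $P_S$; also the polygons of $\calP$ produced for the splinegons bounding $C$ are pairwise disjoint.

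For the upper bound, transform $Q$ into an $a$-$b$ path $Q'$ by replacing each maximal spline section lying on a splinegon $S$ with the sub-chain of $bd(P_S)$ spanned by the points of $V_\calP$ on that section (its two endpoints together with any interior horizontal/vertical tangent vertices). Each edge of this sub-chain is a chord of an $xy$-monotone sub-arc, so its $L_1$-length is at most that of the sub-arc it replaces; hence $|Q'|\le|Q|=d$. Every segment of $Q'$ is either an axis-parallel segment already present in $Q$, which lies in $\cal{F(S)}$, or an edge of some $P_S$, which lies on $bd(P_S)$; since the points of $\cal{F(S)}$ missing from $\cal{F(P)}$ all lie in lunes trapped between a boundary arc of some splinegon and the chord of $P_S$ spanning it, and $Q$ (being a shortest path staying in the corridor) avoids such lunes, together with pairwise disjointness of the polygons of $\calP$ we get $Q'\subseteq\cal{F(P)}$, so $dist_\calP(a,b)\le d$.

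For the lower bound, let $\tau$ be any $a$-$b$ path in $\cal{F(P)}$. Wherever $\tau$ enters the interior of a splinegon $S$, that portion of $\tau$ is confined to the lune between an edge of $P_S$ and the $xy$-monotone boundary arc of $S$ it replaced; reroute that portion along the arc, which lies on $bd(S)\subseteq\cal{F(S)}$ (the splinegons being pairwise disjoint). Since the arc is $xy$-monotone, its $L_1$-length equals the $L_1$-distance between its endpoints, which the replaced portion of $\tau$ cannot undercut; so no rerouting step increases the length, and after finitely many steps $\tau$ becomes an $a$-$b$ path $\widetilde\tau\subseteq\cal{F(S)}$ with $|\widetilde\tau|\le|\tau|$. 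Therefore $|\tau|\ge|\widetilde\tau|\ge d$, giving $dist_\calP(a,b)\ge d$, and combined with the upper bound, $dist_\calP(a,b)=d$. The step I expect to be the main obstacle is the bookkeeping behind the second paragraph: checking that the vertices placed in $V_\calP$ genuinely subdivide every boundary arc that participates (in $Q$, in a funnel side, or as an arc encountered by some $\tau$) into $xy$-monotone pieces, and that the polygons so constructed are pairwise disjoint; granting that, the chord-versus-arc length accounting and hence both inequalities are routine.
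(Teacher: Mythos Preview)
Your lower-bound argument is essentially the paper's entire proof: the paper argues only by contradiction that no $a$-$b$ path in $\calF(\calP)$ can have length strictly less than $d$, observing that any such path must cross a splinegon only through an arc lying outside the convex hull of its carrier polygon, and that such an arc is $xy$-monotone, so the crossing portion can be rerouted along the arc without increasing $L_1$-length---exactly your rerouting of $\tau$ into $\widetilde\tau$.

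Where you differ is in explicitly supplying the upper bound $dist_\calP(a,b)\le d$, which the paper omits. Your straightening of $Q$ to $Q'$ is the natural argument, and the construction of $\calP$ in this subsection (adding to $V_\calP$ the endpoints of every maximal spline section of the fixed corridor shortest path $Q$) is precisely what is needed to make each replaced arc subtend a single edge of some $P_S$. The one place your write-up is looser than it need be is the claim that the axis-parallel segments of $Q$ avoid the lunes $\calF(\calS)\setminus\calF(\calP)$; this is true but deserves a sentence: any such segment of $Q$ lies in the closed corridor, and if it entered a lune bounded by an arc $\gamma$ and its chord, then $\gamma$ would be a side of a funnel or part of the corridor path, whence its horizontal/vertical tangent points and spline-section endpoints are already in $V_\calP$, forcing the chord to coincide with or lie on the far side of the segment. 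With that small addition, your proof is correct and strictly more complete than the paper's.
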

\begin{proof}
Suppose there is a path $Q$ in $\calP$ whose length is less than $d$. 
Now $Q$ must intersect with a splinegon $S$.
Let $v'$ and $v''$ be the points of intersection of $Q$ with $S$.
But $Q$ can only intersect with an edge of splinegon which is outside the convex hull of its corresponding carrier polygon.
Since these edges are $xy$-monotone, we can replace the section of path $Q$ between $v'$ and $v''$ with the section of spline edge between $v'$ and $v''$.
\end{proof}

\begin{lemma}
\label{lem:clsd}
If path $Q$ in $\cal{F(S)}$ is a shortest path between $s$ and $t$ amid polygons in $\cal{F(P)}$ then $Q$ is a shortest path between $s$ and $t$ amid splinegons in $\calS$.
\end{lemma}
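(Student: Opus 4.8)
The plan is to mirror the structure of the proof of Lemma~\ref{lem:dist} from the concave-in case, now invoking the machinery built for general simple splinegons with closed corridors. The key fact I would lean on is Lemma~\ref{lem:length}: for every closed corridor, the shortest distance between its two apices is preserved when we pass from $\calS$ to $\calP$. Together with the analogous preservation statements for open hourglasses (implicit in Lemma~\ref{lem:s2p} and the construction of $V_\calP$), this gives that $dist_\calP(s,t) = dist_\calS(s,t)$.

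First I would observe that, by construction, $\cal{F(S)} \subseteq \cal{F(P)}$ (each polygon $P_S$ lies inside the region bounded by $bd(S)$ outside the relevant convex hulls, and the only places $P_S$ bulges out relative to $S$ are the $S$-$seg$-type regions cut off by $xy$-monotone spline arcs). Hence any $s$-$t$ path amid splinegons in $\calS$ is also a path amid polygons in $\calP$, so $dist_\calP(s,t) \le dist_\calS(s,t)$. For the reverse inequality, let $Q$ be the given shortest path between $s$ and $t$ amid polygons in $\cal{F(P)}$. Following the argument in Lemma~\ref{lem:length}, wherever $Q$ enters the interior of a splinegon $S$, it can only do so by crossing an edge of $P_S$ that lies outside the convex hull of the carrier polygon of $S$; such edges correspond to $xy$-monotone sections of splines, so the sub-path of $Q$ between its entry point $v'$ and exit point $v''$ can be replaced by the section of the spline between $v'$ and $v''$ without increasing the $L_1$-length. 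Performing all such replacements yields a path $Q'$ with the same rectilinear length as $Q$ and with $Q' \subseteq \cal{F(S)}$, so $dist_\calS(s,t) \le dist_\calP(s,t)$. Combining, $Q'$ is a rectilinear shortest path between $s$ and $t$ amid splinegons in $\calS$, and since $Q$ and $Q'$ have the same length, $Q$ is shortest amid splinegons as well (in the stated sense — $Q$ itself already lies in $\cal{F(S)}$ by hypothesis, so in fact $Q = Q'$ up to the monotone re-routings that do not change length).

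The main obstacle is handling the interaction of the re-routing step with the closed-corridor structure: unlike the concave-in case, here the spline can be concave-out, so a single "bulge" region $P_S \setminus S$ is no longer simply a local cap but must be argued edge-by-edge using the fact that $V_\calP$ was seeded with every horizontal/vertical tangent point lying outside the carrier-polygon convex hull, which forces each boundary edge of $P_S$ outside that hull to be $xy$-monotone. I would make this precise by noting that between two consecutive vertices of $V_\calP$ on $bd(S)$ that lie outside the convex hull, the spline arc has no horizontal or vertical tangent in its interior, hence is monotone in both coordinates, so replacing a straight chord by this arc (or vice versa) changes neither the $x$-extent nor the $y$-extent traversed and therefore preserves $L_1$-length. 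The only remaining subtlety is ensuring the replaced path stays in $\cal{F(S)}$ — i.e. that following the spline arc does not cause $Q'$ to re-enter some other splinegon — which follows because the splinegons (and their carrier polygons) are pairwise disjoint and $Q$ was already avoiding all polygon interiors, so locally swapping a chord of $P_S$ for the corresponding arc of $S$ only moves the path \emph{towards} $bd(S)$, staying within the corridor region where $Q$ already lived.

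\begin{proof}
By construction, every polygon $P_S\in\calP$ lies in the closed region bounded by $bd(S)$, with $P_S$ and $S$ differing only in the caps bounded by an $xy$-monotone spline arc and its chord; hence $\cal{F(S)}\subseteq\cal{F(P)}$, and $dist_\calP(s,t)\le dist_\calS(s,t)$. Conversely, wherever $Q$ meets the interior of a splinegon $S$, it crosses an edge of $P_S$ lying outside the convex hull of the carrier polygon of $S$; as in Lemma~\ref{lem:length} (and Lemma~\ref{lem:ssegonce}), between consecutive vertices of $V_\calP$ on $bd(S)$ outside that convex hull the spline has no horizontal or vertical tangent, so it is $xy$-monotone, and the sub-path of $Q$ between an entry point $v'$ and exit point $v''$ can be replaced by the section of the spline from $v'$ to $v''$ without changing the rectilinear length. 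Carrying out all such replacements (which, by pairwise disjointness of the splinegons and their carrier polygons, keep the path inside the same corridor region) produces a path $Q'\subseteq\cal{F(S)}$ of the same length as $Q$. Thus $dist_\calS(s,t)\le dist_\calP(s,t)$, so equality holds and $Q'$, hence $Q$, is a shortest path between $s$ and $t$ amid splinegons in $\calS$.
\end{proof}
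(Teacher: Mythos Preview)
Your argument hinges on the containment $\cal{F(S)}\subseteq\cal{F(P)}$ (equivalently $P_S\subseteq S$ for every $S$), and this is where it breaks down. That containment held in the concave-in setting of Section~\ref{subsect:algoconcavein}, but it is \emph{false} for general splinegons. Take a concave-out edge $s_i$ of $S$: the spline dips into the interior of the carrier polygon, and any horizontal or vertical tangent point on $s_i$ lies inside the convex hull of the carrier polygon, so by the construction in Section~\ref{subsect:algoslinegon} it is \emph{not} added to $V_\calP$. Hence the edge of $P_S$ across $s_i$ is simply the chord joining the endpoints of $s_i$, which lies outside $S$; the cap between that chord and $s_i$ belongs to $\cal{F(S)}$ but not to $\cal{F(P)}$. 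Your own plan already hints at this (``the only places $P_S$ bulges out relative to $S$\ldots''), yet the formal proof then asserts $P_S\subseteq S$, which contradicts that observation. With the containment gone, your inequality $dist_\calP(s,t)\le dist_\calS(s,t)$ is unsupported: a shortest $s$--$t$ path in $\cal{F(S)}$ may pass through such a cap and hence need not lie in $\cal{F(P)}$ at all. (A symmetric re-routing along the chord could repair this direction, but you do not carry it out, and you would also have to argue that the two kinds of caps cannot interact badly.)

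The paper's proof avoids this geometric containment issue entirely. It does not compare $\cal{F(S)}$ and $\cal{F(P)}$ directly; instead it works at the level of the visibility graphs, extending the argument of Lemma~\ref{lem:graph} (which already established $G_\calS=G_\calP$ when only open corridors are present) by observing that the sole new edges introduced by closed corridors are the apex-to-apex edges, and Lemma~\ref{lem:length} guarantees those have the same weight in $G_\calS$ and $G_\calP$. Since the algorithm of \cite{journals/comgeo/InkuluK09a} returns a shortest path in $G_\calP$, and $G_\calP=G_\calS$, Theorem~\ref{thm:vistempcorr} finishes the job. Note in particular that this is why Lemma~\ref{lem:graph} carries the qualifier ``computed using the algorithm given in \cite{journals/comgeo/InkuluK09a}'': the conclusion is about paths produced by that specific graph, not about arbitrary shortest paths in $\cal{F(P)}$, precisely because the naive free-space comparison you attempt does not go through.
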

\begin{proof}
We prove that the graph $G_\calS$ corresponding to $\calS$ is same as the graph $G_\calP$ corresponding to $\calP$.
The only new edges that would occur in $G_\calS$ due to closed corridors are the edges joining two apices of every closed corridor. 
For every such edge $e$, the length $d$ of $e$ is the rectilinear distance between the apices of two funnels of a closed corridor $C$ in $\cal{F(S)}$.
Since the distance between apices of the closed corridor corresponding to $C$ in $\cal{F(P)}$ equals to $d$ and because of Lemma~\ref{lem:length}, both $G_{\calS}$ and $G_{\calP}$ are precisely same. 
\end{proof}

\begin{lemma}
\label{lem:closed}
Computing a polygonal domain $\calP$ from the given splinegonal domain $\calS$ takes $O(n + h \lg{n})$ time.
\end{lemma}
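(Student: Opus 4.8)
The plan is to split the running time into three parts: (a) building the corridor and hourglass decomposition of $\calF(\calS)$; (b) the work already accounted for in Lemma~\ref{lem:s2p} for the open-corridor setting --- computing $V_{ortho}$, inserting into $V_\calP$ the splinegon-edge endpoints and the extremal-tangency points lying outside the convex hulls of the carrier polygons, and running the left-to-right plane sweep that projects each $v\in V_{ortho}$ onto the hourglass sides; and (c) the additional work caused by closed corridors --- the funnel sides, the rectilinear shortest path $Q$ between the two apices of each closed corridor, and the endpoints of the maximal spline sections of $Q$. I would show that (a) and (c) each fit in $O(n+h\lg n)$ time (indeed (c) in $O(n)$), so that, together with part (b), the whole construction runs in $O(n+h\lg n)$ time.

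For (a) I would invoke \cite{journals/talg/ChenW15}: it produces $BDD(\calF(\calS))$, the $3$-regular graph $G^3$, the $O(h)$ junction regions and the $O(h)$ corridors, together with each hourglass (open/closed classification, the convex sides $\pi(a,b)$ and $\pi(e,f)$, and, when closed, the two funnels and the corridor path) within the stated bound. Because the corridors partition $\calF(\calS)$ and every splinegon edge lies on the boundary of only $O(1)$ corridors, the total combinatorial size of all hourglasses, funnels, and corridor convex chains is $O(n)$; in particular each convex chain, being convex, carries only $O(1)$ points with a horizontal or vertical tangent, so $|V_{ortho}|=O(h)$, consistent with Lemma~\ref{lem:s2p}.

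For (c), note that each corridor $C$ is itself a simple splinegon (Section~\ref{sect:corrhourglasses}, after Fig.~\ref{fig:f2}), so the rectilinear shortest path $Q$ between its two apices can be computed by a geodesic-type computation inside $C$ in time linear in the number of edges of $C$; this is where the $O(1)$-per-edge model is used, for example to locate the horizontal/vertical tangency points that serve as the turn points of $Q$. Summed over all corridors this is $O(n)$. The number of maximal spline sections of $Q$ is at most the number of edges of $C$ on $Q$, hence $O(n)$ in total, and inserting their endpoints into $V_\calP$ costs $O(n)$. The funnel sides are convex chains of total size $O(n)$, each contributing $O(1)$ vertices to $V_{ortho}$ and $O(1)$ extremal-tangency vertices to $V_\calP$. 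Finally, extending the plane sweep of Lemma~\ref{lem:s2p} so that the sweep-line status also reports the first funnel side encountered merely adds the $O(n)$ funnel edges as sweep objects, leaving the number of events $O(n)$ and the per-event cost $O(\lg n)$ (a binary search along the struck chain). Assembling each $P_S$ by a single walk of $bd(S)$, connecting the points of $V_\calP$ in boundary order, is $O(n)$ because that order is delivered by the same traversal used to detect tangency points. Adding (a), part (b) from Lemma~\ref{lem:s2p}, and (c) gives $O(n+h\lg n)$.

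The main obstacle is the claim in (c) that a \emph{rectilinear} shortest path between the two apices can be found inside a corridor in time linear in the corridor's size: one needs that its staircase-style turn points are exactly the extremal-tangency points already recorded on the bounding convex chains, so that no fresh primitive searches are incurred and the per-corridor cost carries no extra logarithmic factor. A secondary point requiring care is that the test ``$p$ lies outside the convex hull of the carrier polygon of $S$'' can be answered in $O(1)$ amortized per candidate point: I would precompute, in time linear in $|S|$, the partition of $bd(S)$ into arcs on the convex hull and arcs inside pockets, and read off membership while traversing $bd(S)$, rather than performing $O(\lg n)$ point-location queries.
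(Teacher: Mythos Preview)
Your decomposition into parts (b) and (c) matches the paper's proof exactly: the paper invokes Lemma~\ref{lem:s2p} for the funnel sides (treated just like open-hourglass sides) and then argues that the work inside each closed corridor is linear in its size, summing to $O(n)$. Your part (a), the cost of building the corridor/hourglass decomposition itself, is not charged to this lemma in the paper; it is treated as already available from Section~\ref{sect:corrhourglasses}.

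The one place where you spend effort that the paper does not is your ``main obstacle''. The paper does not argue from first principles that the rectilinear shortest path between the two apices of a closed corridor can be computed in linear time; it simply cites \cite{journals/algorithmica/DobkinS90,journals/siamcomp/MelissaratosS92} for an $O(k)$ shortest-path computation inside a simple splinegon with $k$ vertices (the corridor itself being a simple splinegon), and then observes that walking along that path and depositing the endpoints of maximal spline sections into $V_\calP$ is again $O(k)$. So your attempt to justify linearity via extremal-tangency turn points is more than the paper provides; a citation suffices. Your secondary concern about the convex-hull membership test is likewise not discussed in the paper's proof of this lemma; that test belongs to the earlier construction of $V_\calP$ for open hourglasses and is absorbed into the $O(n)$ term there.
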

\begin{proof}
Since funnels are processed analogous to open hourglasses and due to Lemma~\ref{lem:s2p}, computing polygonal chains corresponding to sides of open corridors and funnels together take $O(n + h \lg{n})$ time.  
Computing a shortest path between apices of any closed corridor $C$ in $\cal{F(S)}$ takes $O(k)$ time \cite{journals/algorithmica/DobkinS90,journals/siamcomp/MelissaratosS92}, where $k$ is the number of vertices that belong to that closed corridor. 
Further, traversing along a shortest path $Q$ between two apices of $C$ and introducing vertices of $\calP$ along $Q$ takes $O(k)$ time. 
\end{proof}

\begin{theorem}
Given a splinegonal domain $\calS$ comprising of $h$ pairwise disjoint simple splinegons together defined with $n$ vertices and two points $s, t \in \cal{F(S)}$, the reduction procedure to compute a rectilinear shortest path between $s$ and $t$ amid splinegons in $\calS$, excluding the time to compute a rectilinear shortest path amid polygons in $\calP$, takes $O(n+h \lg{n})$ time.
Here, $\calP$ is the polygonal domain computed from $\calS$.
\end{theorem}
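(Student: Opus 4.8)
The plan is to assemble this final theorem directly from the lemmas already established in Subsection~\ref{subsect:algoslinegon}, treating the open-corridor case and the closed-corridor case separately and then combining. First I would recall that the reduction procedure consists of two conceptual tasks: (1) constructing the polygonal domain $\calP$ from $\calS$ (building the vertex set $V_\calP$ and linking consecutive vertices into polygons $P_S$), and (2) certifying that a rectilinear shortest path computed amid $\calP$ is in fact a rectilinear shortest path amid $\calS$. Task~(2) is purely structural and carries no running-time cost beyond what is already charged in task~(1): Lemma~\ref{lem:simppoly2} guarantees each $P_S$ is simple, Lemma~\ref{lem:graph} shows $G_\calP = G_\calS$ in the open-corridor setting, and Lemma~\ref{lem:clsd} (via Lemma~\ref{lem:length}) extends the graph equality to the closed-corridor setting. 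So the correctness half of the theorem is immediate from these lemmas together with Theorem~\ref{thm:vistempcorr}, which certifies that a shortest path in $G_\calS$ is a genuine $L_1$-shortest path amid $\calS$.

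For the timing half, I would invoke Lemma~\ref{lem:closed}, which already states that computing $\calP$ from $\calS$ takes $O(n + h\lg n)$ time in full generality (open and closed corridors). It is worth spelling out where each term comes from: adding $O(1)$ points to each of the $n$ splinegon edges and traversing each edge contributes the $O(n)$ term; the plane-sweep of Lemma~\ref{lem:s2p} that projects the $O(h)$ points of $V_{ortho}$ onto hourglass/funnel sides contributes the $h\lg n$ term (each projection is a binary search against a convex chain, $O(\lg n)$ time, and each $V_{ortho}$ vertex is inserted and deleted from the sweep structure once, $O(h\lg h)$ total); and for closed corridors, computing a Euclidean shortest path between the two apices of a corridor with $k_C$ vertices takes $O(k_C)$ time by \cite{journals/algorithmica/DobkinS90,journals/siamcomp/MelissaratosS92}, and since the corridors partition the free-space boundary these $k_C$'s sum to $O(n)$. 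Adding these contributions gives exactly $O(n + h\lg n)$, which is the claimed bound.

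The only genuine subtlety — and the step I expect to require the most care — is making sure that the closed-corridor construction does \emph{not} introduce the extra $k$ and $k'$ terms that appeared in the concave-in theorem of Subsection~\ref{subsect:algoconcavein}. There, those terms arose because the shortest path $SP_\calP(s,t)$ was treated as an input whose intersections with $\calS$ had to be found by a separate plane sweep. Here the situation is different: the corridor paths $\pi_C$ are fixed geometric objects determined by $\calS$ alone (they do not depend on $s$ and $t$ beyond $s,t$ being treated as degenerate splinegons), so the shortest path between apices of a closed corridor can be computed once, as part of building $\calP$, at cost linear in the corridor size — never paying per-segment or per-intersection costs against an externally supplied path. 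I would make this point explicitly, perhaps with one sentence, so the reader sees why the bound collapses to $O(n + h\lg n)$ and matches the open-corridor case. After that observation, the proof is just: correctness from Lemmas~\ref{lem:simppoly2}, \ref{lem:graph}, \ref{lem:clsd} and Theorem~\ref{thm:vistempcorr}; timing from Lemma~\ref{lem:closed}.
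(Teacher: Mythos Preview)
Your proposal is correct and follows essentially the same route as the paper's own (very terse, and in fact commented-out) proof, which simply cites Lemmas~\ref{lem:graph}, \ref{lem:length}, \ref{lem:clsd} for correctness and Lemmas~\ref{lem:s2p}, \ref{lem:closed} for the time bound. Your version is more expansive---you additionally invoke Lemma~\ref{lem:simppoly2} and Theorem~\ref{thm:vistempcorr}, and you spell out why the closed-corridor handling avoids the $k,k'$ terms of the concave-in case---but the underlying argument is the same assembly of prior lemmas.
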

\ignore {
\begin{proof}
Lemma~\ref{lem:graph}, Lemma~\ref{lem:length} and Lemma~\ref{lem:clsd} ensures the correctness.
The time complexity follows from Lemma~\ref{lem:s2p} and Lemma~\ref{lem:closed}.
\end{proof}
}

\section{Conclusions}
\label{sect:conclu}

We have devised an algorithm to reduce the problem of computing a rectilinear shortest path between two points in the splinegonal domain to the problem of computing a rectilinear shortest path between two points in the polygonal domain.
The reduction algorithm given for the case of concave-in splinegon obstacles does not rely on details of the algorithm to compute a rectilinear shortest path between two points amid polygonal obstacles.
Further, as part of this, we have generalized few of the properties given for rectilinear shortest paths in polygonal domain to the case of rectilinear shortest paths amid splinegons.
It would be interesting to devise rectilinear shortest path algorithms when the obstacles in plane are more generic.

\bibliographystyle{plain}


\begin{thebibliography}{10}

\bibitem{journals/ijcga/Bar-YehudaC94}
R.~Bar{-}Yehuda and B.~Chazelle.
\newblock Triangulating disjoint jordan chains.
\newblock {\em International Journal of Computational Geometry \&
  Applications}, 4(4):475--481, 1994.

\bibitem{journals/jocg/ChenIW16}
D.~Z. Chen, R.~Inkulu, and H.~Wang.
\newblock {Two-point $L_1$ shortest path queries in the plane}.
\newblock {\em Journal of Computational Geometry}, 7(1):473--519, 2016.

\bibitem{conf/socg/ChenW13}
D.~Z. Chen and H.~Wang.
\newblock {Computing shortest paths among curved obstacles in the plane}.
\newblock In {\em Proceedings of Symposuim on Computational Geometry}, pages
  369--378, 2013.

\bibitem{journals/talg/ChenW15}
D.~Z. Chen and H.~Wang.
\newblock {Computing shortest paths among curved obstacles in the plane}.
\newblock {\em {ACM Transactions on Algorithms}}, 11(4):26:1--26:46, 2015.

\bibitem{conf/socg/ClarksonKV87}
K.~L. Clarkson, S.~Kapoor, and P.~M. Vaidya.
\newblock {Rectilinear shortest paths through polygonal obstacles in
  $O(n(\lg{n})^2)$ time}.
\newblock In {\em Proceedings of Symposium on Computational Geometry}, pages
  251--257, 1987.

\bibitem{journals/algorithmica/DobkinS90}
D.~P. Dobkin and D.~L. Souvaine.
\newblock {Computational Geometry in a curved world}.
\newblock {\em Algorithmica}, 5(3):421--457, 1990.

\bibitem{journals/algorithmica/DobkinSW88}
D.~P. Dobkin, D.~L. Souvaine, and C.~J. Van~Wyk.
\newblock {Decomposition and intersection of simple splinegons}.
\newblock {\em Algorithmica}, 3:473--485, 1988.

\bibitem{books/visalgo/skghosh2007}
S.~K. Ghosh.
\newblock {\em {Visibility algorithms in the plane}}.
\newblock Cambridge University Press, New York, NY, USA, 2007.

\bibitem{journals/siamcomp/GhoshM91}
S.~K. Ghosh and D.~M. Mount.
\newblock {An output-sensitive algorithm for computing visibility graphs}.
\newblock {\em {SIAM} Journal on Computing}, 20(5):888--910, 1991.

\bibitem{journals/siamcomp/HershbergerS99}
J.~Hershberger and S.~Suri.
\newblock {An optimal algorithm for Euclidean shortest paths in the plane}.
\newblock {\em SIAM Journal on Computing}, 28(6):2215--2256, 1999.

\bibitem{conf/socg/HershbergerSY13}
J.~Hershberger, S.~Suri, and H.~Yildiz.
\newblock {A near-optimal algorithm for shortest paths among curved obstacles
  in the plane}.
\newblock In {\em Proceedings of Symposuim on Computational Geometry}, pages
  359--368, 2013.

\bibitem{journals/comgeo/InkuluK09a}
R.~Inkulu and S.~Kapoor.
\newblock Planar rectilinear shortest path computation using corridors.
\newblock {\em Computational Geometry}, 42(9):873--884, 2009.

\bibitem{journals/corr/InkuluKM10}
R.~Inkulu, S.~Kapoor, and S.~N. Maheshwari.
\newblock {A near optimal algorithm for finding Euclidean shortest path in
  polygonal domain}.
\newblock {\em CoRR}, abs/1011.6481, 2010.

\bibitem{conf/socg/Kapoor88}
S.~Kapoor and S.~N. Maheshwari.
\newblock {Efficent algorithms for Euclidean shortest path and visibility
  problems with polygonal obstacles}.
\newblock In {\em Proceedings of Symposium on Computational Geometry}, pages
  172--182, 1988.

\bibitem{journals/siamcomp/KapoorM00}
S.~Kapoor and S.~N. Maheshwari.
\newblock {Efficiently constructing the visibility graph of a simple polygon
  with obstacles}.
\newblock {\em SIAM Jounral on Computing}, 30(3):847--871, 2000.

\bibitem{journals/dcg/KapoorMM97}
S.~Kapoor, S.~N. Maheshwari, and J.~S.~B. Mitchell.
\newblock {An efficient algorithm for Euclidean shortest paths among polygonal
  obstacles in the plane}.
\newblock {\em Discrete {\&} Computational Geometry}, 18(4):377--383, 1997.

\bibitem{journals/siamcomp/MelissaratosS92}
E.~A. Melissaratos and D.~L. Souvaine.
\newblock {Shortest paths help solve geometric optimization problems in planar
  regions}.
\newblock {\em {SIAM} Journal on Computing}, 21(4):601--638, 1992.

\bibitem{journals/ijcga/Mitchell96}
J.~S.~B. Mitchell.
\newblock {Shortest paths among obstacles in the plane}.
\newblock {\em International Journal of Computational Geometry \&
  Applications}, 6(3):309--332, 1996.

\bibitem{journals/ipl/Rohnert86}
H.~Rohnert.
\newblock {Shortest paths in the plane with convex polygonal obstacles}.
\newblock {\em Information Processing Letters}, 23(2):71--76, 1986.

\bibitem{journals/jacm/StorerR94}
J.~A. Storer and J.~H. Reif.
\newblock {Shortest paths in the plane with polygonal obstacles}.
\newblock {\em Journal of {ACM}}, 41(5):982--1012, 1994.

\end{thebibliography}

\section{Appendix}
\label{sect:append}

\begin{customlemma}{\ref{lem:ascseq}}
Sorting the set of points in $S_1(p)$ in increasing x-order results in the same set of points being sorted in decreasing y-order (or, vice versa).
\end{customlemma}
\vspace*{-.10in}
\begin{proof}
Let $p_1, p_2, \ldots, p_l$ be the points in $S_1(p)$ in increasing $x$-order.
The $y$-coordinate of $p_{j+1}$ cannot be greater than the $y$-coordinate of $p_j$ (for $j \in \{1, \ldots, l\}$) without violating condition (ii) in the definition  of $S_1(p)$.
\end{proof}

\begin{customlemma}{\ref{lem:geoment}}
Along the $S_1(p)$-staircase, any two adjacent points in $S_1(p)$ are joined by at most three geometric entities.
These entities ordered by increasing $x$-coordinates are: (i) a horizontal line segment, (ii) a section of a convex chain where tangent to each point in that section of splinegon has a negative slope, and (iii) a vertical line segment.
\end{customlemma}
\vspace*{-.10in}
\begin{proof}
Let $p_j$ and $p_{j+1}$ be two adjacent points in $S_1(p)$.
Let $l_h$ be the line segment $p_jh^p$, where $h^p=h_j^p$, and $l_v$ be the line segment $p_{j+1}v^p$, where $v^p=v^p_{j+1}$.
Let $R$ be the region bounded by $pp_j, l_h$ and sections of convex chain between $h^p$ and $v^p$ along the staircase, $l_v$ and $p_{j+1}p$.
Let us assume that a section of convex chain $CC$ intersects $R$.

We first note that $CC$ cannot intersect $l_h$ or $l_v$ or the section of convex chain between $h_p$ and $v_p$.
Further, $CC$ cannot intersect $pp_j$ or $pp_{j+1}$ either as it would violate rule (iii) of the definition of $S_1(p)$.
Therefore $CC$ lies completely within $R$.
Then the endpoint of $CC$ with lesser x-coordinate, say $a$, belongs to $S_1(p)$, contradicting our assumption that $p_j$ and $p_{j+1}$ are consecutive points on the staircase.
(Refer Fig.~\ref{fig:int}.)

We prove that if $h^p$ is not the same point as $v^p$ then the two are incident to the same convex chain.
If $h^p$ is equal to $v^p$ then there is nothing to prove.
Let $h^p$ be located on a convex chain $CC_k$ and let $v^p$ be located on a convex chain $CC_l$, where $CC_{k}\neq CC_{l}$.
Let $CC_k, CC_{k+1}, \ldots, CC_{l-1}, CC_{l}$ be the consecutive sequence of sections of convex chains encountered while traversing from $h^p$ to $v^p$.
Let $P$ be the set consisting of points of intersection of any two adjacent entities in this sequence including $h^p$ and $v^p$.
Note that every point of $P$ belongs to the vertex set $V_{ortho}$.
Since $p_j$ and $p_{j+1}$ are adjacent points in $S_1(p)$, we obtain a contradiction if there exists at least one point in $P \cap S_1(p)$ whenever $|P|>2$.
Suppose $P \cap S_1(p)$ is empty and $|P|>2$.
Let $CC_j$ be the first convex chain along the staircase while traversing the staircase in increasing $x$-order, starting at $h^p$. 
Let $p'_j$ be the endpoint of $CC_j$ such that there exists a tangent $pp_t$ to $CC_{j}$ where $p_t$ is located on $CC_{j}$ and $p_t$ is visible to $p$.
If no such convex chain or corresponding point $p_t$ exists then the endpoint of the first convex chain along the staircase (while traversing the staircase from $h_p$) is a point that is visible to $p$ and belongs to $P \cap S_1(p)$.
Thus at least one such $CC_j$ exists.
Let $q_b$ and $q_e$ be the first and last points $CC_j$ as the staircase is traversed from $h^p$ in increasing $x$ coordinates order. 
We prove that there exists a point $r$ located on section of convex chain $CC_j$ between (and including) $q_b$ and $p_t$ such that $r \in S_1(p)$; hence,leading to a contradiction.
Let $e_i$ and $e_{i+1}$ be the sections of $CC_j$ incident on $p_t$. 
The following are the exhaustive cases that arise based on the slopes of $e_i$ and $e_{i+1}$.

\begin{figure}[h]
\centering
\includegraphics[width=0.5\textwidth]{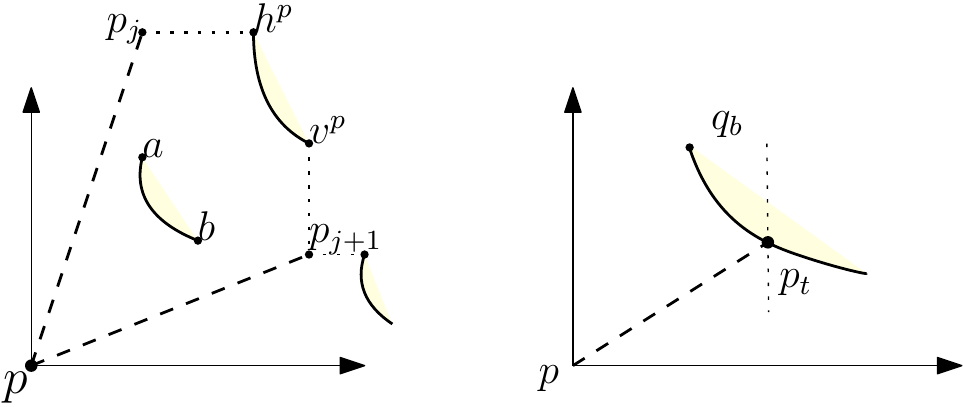}
\caption{Illustrating that no convex chain intersects $R$ (left) and Case 1 (right).}\label{fig:int}
\end{figure}

\begin{itemize}
\item[-] Case 1: W.l.o.g., suppose $e_i$ is below and $e_{i+1}$ is above a horizontal line passing through $p_t$.
In this case both the sections have negative slope at $p_t$.
Then $pp_t$ cannot be a tangent to $CC_j$ as it intersects $CC_j$, contradicting the choice of $CC_j$.
(Refer Fig.~\ref{fig:int}.)

\item[-] Case 2.1: Both $e_i$ and $e_{i+1}$ are of non-negative slope at point $p_t$ and $CC_j$ starts with a negative slope at point $q_b$ with y-coordinate of $q_b$ greater than its adjacent point on $CC_j$.
Then there exists a point $r$ on $CC_j$, located between $q_b$ and $p_t$, such that $r$ has a tangent to $CC_j$ parallel to $x$ axis.
Since $CC_j$ is of negative slope at $q_b$ and $CC_j$ is the first convex chain along the staircase to have a visible point of tangency from $p$, all points located on the section of convex chain between $q_b$ and $p_t$ are visible to $p$.
(Refer Fig.~\ref{fig:case2.2.1}.)

\begin{figure}[h]
\centering
\includegraphics[width=0.5\textwidth]{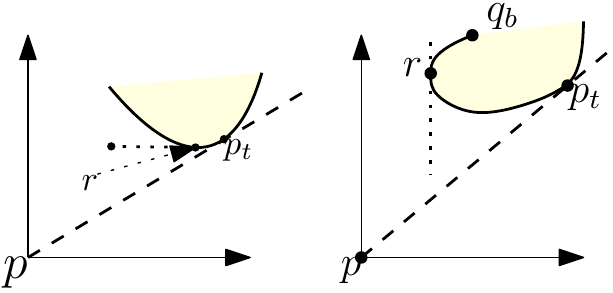}
\caption{Illustrating Case 2.1 (left) and Case 2.2 (right).}\label{fig:case2.2.1}
\end{figure}

\item[-] Case 2.2: Suppose both $e_i$ and $e_{i+1}$ are of non-negative slope at point $p_t$ and $CC_j$ is of non-negative slope at $q_b$.
Also suppose $y$-coordinate of $q_b$ is greater than its adjacent point in $CC_j$.
Then there exists a point $r$ located on $CC_j$ between $q_b$ and $p_t$ such that there is a tangent parallel to $y$-axis on $CC_j$ at $r$.
Further, since $r$ is the leftmost vertex and because of the choice of convex chain $r$ is visible to $p$. 
(Refer Fig.~\ref{fig:case2.2.1}.)

\item[-] Case 2.3: Suppose both $e_i$ and $e_{i+1}$ are of non-negative slope at point $p_t$, and there are two subcases for $CC_j$: (i) $CC_j$ is of non-negative slope; (ii) $CC_j$ is of negative slope at point $q_b$ and the y-coordinate of $q_b$ is less than its adjacent point on $CC_j$.
As $CC_j$ is a convex chain, subcase (i) is possible if and only if at every point from $q_b$ to $p_t$ has a non-negative slope.
This means that if suppose $q_b$ is not an endpoint of the convex chain then there is a convex chain intersecting the region $R$ which results in a contradiction.
(Refer Fig.~\ref{fig:case2.2.3}.)
Next we argue that subcase (ii) never arises.
Consider the section of the convex chain joining $q_b$ with $p_t$.
Since a region that includes obstacle and is bounded by the convex chain must lie outside $R$, the shape of the section of convex chain from $q_b$ to $p_t$ forces that region to be non-convex, contradicting the convex chain properties.
\begin{figure}[h]
\centering
\includegraphics[width=0.5\textwidth]{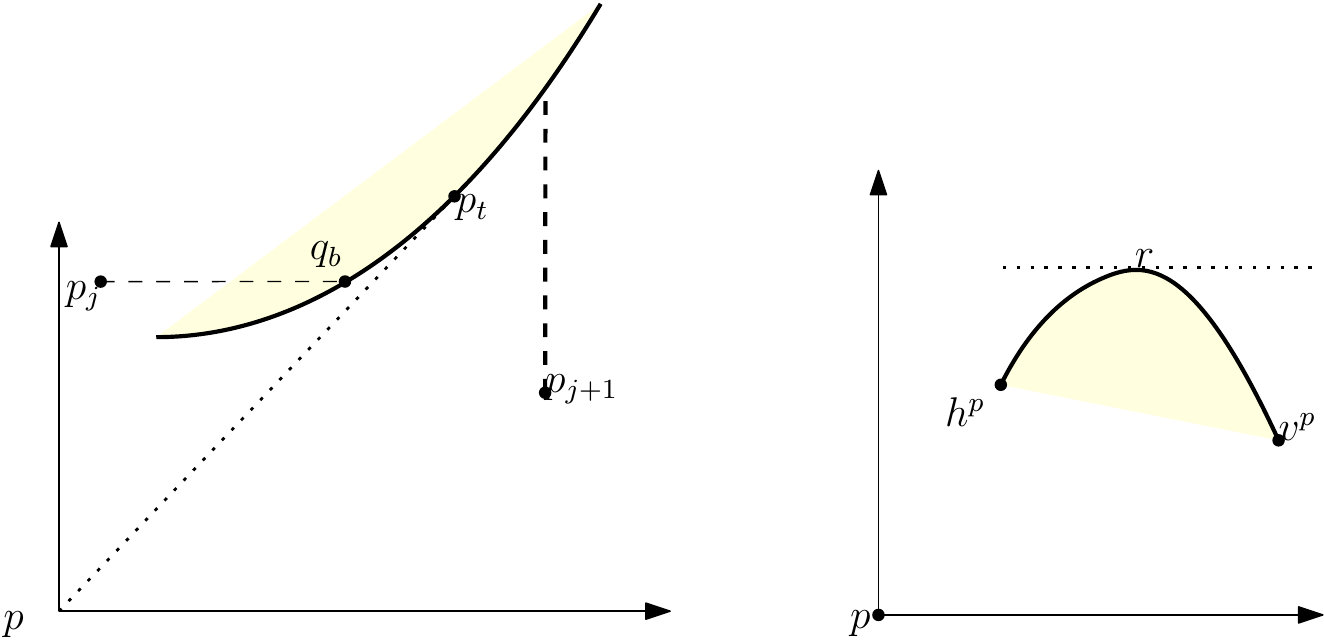}
\caption{Illustrating Case 2.3 subcase (i) and Case (a) (right).}\label{fig:case2.2.3}
\end{figure}
\end{itemize}

Let us name the only possible section of convex chain between $p_j$ and $p_{j+1}$ along the staircase as $CC$.
We prove that the $y$-coordinate of $v_p$ is always less than or equal to the $y$-coordinate of $h_p$, where $v_p$ is the vertical projection of $p_{j+1}$ on $CC$ and $h_p$ is the horizontal projection of $p_j$ on $CC$.
Let us assume that the $y$-coordinate of $v_p$ is greater than $h_p$. 
Since $p_j$ and $p_{j+1}$ are consecutive points in $S_1(p)$, this can only happen if the slope of $CC$ is positive at every point.
However if $h_p$ is not an endpoint of a convex chain then there is a convex chain intersecting the region $R$ which we had proven to be not possible, leading to a contradiction.
Thus the $y$-coordinate of $v_p$ is always less than equal to $h_p$.
Now we prove that every point of $CC$ has a negative slope.

\begin{itemize}
\item[-]
Case (a): Let $CC$ start with a non-negative slope i.e., $h^p$ has a non-negative slope.
We consider two possible subcases, either (i) y-coordinate of $h^p$ is less than that of its adjacent points in $CC$, or (ii) the $y$-coordinate of $h_p$ is greater  than the y-coordinate of its adjacent points in $CC$.
Since the $y$-coordinate of $h_p$ is greater than or equal to the $y$-coordinate of $v^p$, there exists a vertex $r$ in CC where the slope changes from positive to negative.
In subcase (i) the convex chain has an obstacle below $r$ and hence a convex chain intersects $R$, which is a contradiction (refer Fig.~\ref{fig:case2.2.3}).
In the second case, if $r$ is not visible from $p$ then a convex chain intersects $R$, which is a contradiction.
If $r$ is visible from $p$ then $r \in S_1(p)$, which contradicts the assumption that $p_j$ and $p_{j+1}$ are adjacent points along the staircase.

\item[-]
Case (b): Let $CC$ start with a negative slope.
Let $r$ be the point where the slope changes to non-negative.
Thus there is a horizontal/vertical tangent to $CC$  at $r$.
If $r$ is not visible from $p$ then a convex chain intersects $r$. 
However, this is not possible. 
Thus if $r$ is visible to $p$ then $r \in S_1(p)$, which contradicts our assumption that $p_j$ and $p_{j+1}$ are adjacent points in $S_1(p)$.
Thus every point on $CC$ has a negative slope.
\end{itemize}
\end{proof}

\begin{figure}[h]
\centering
\includegraphics[width=0.7\textwidth]{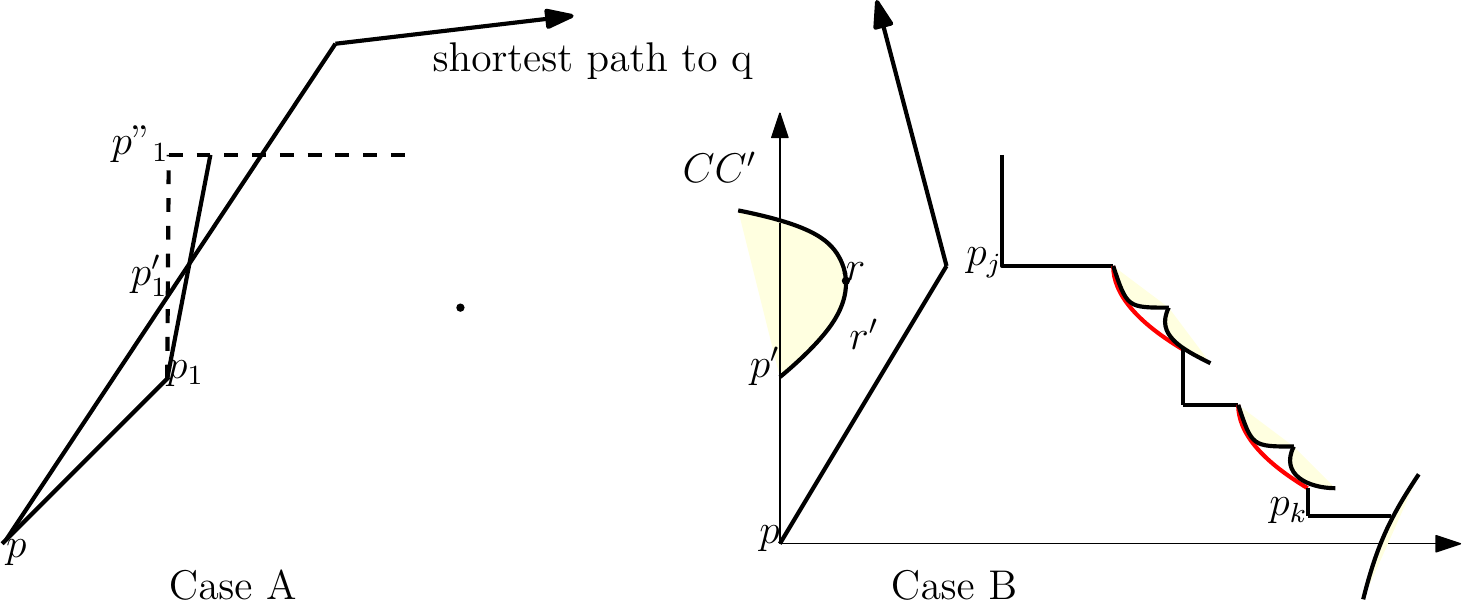}
\caption{Replacing a shortest path from $p$ to $q$ with edges from $G_{\mathrm{vistmp}}$.}
\label{fig:caseb}
\end{figure}

Note similar arguments to Lemma~\ref{lem:geoment} can be given for $S_i(p)$ where $i \in \{2,3,4\}$.

\begin{customthm}{\ref{thm:vistempcorr}}
Let $\{p, q\} \subseteq V_{\mathrm{vistmp}}$.  
Then a shortest path from $p$ to $q$ in $G_{\mathrm{vistmp}}$ defines a rectilinear shortest path from $p$ to $q$ that does not intersect any of the obstacles in $\calS$.
\end{customthm}
\vspace*{-.10in}
\begin{proof}
Consider a shortest path $Q$ from $p$ to $q$ that avoids all obstacles in $calS$. 
We need to consider the following two cases.

Case (a): The shortest path $Q$ crosses a staircase structure defined with respect to point $p$. 
Since convex chains have an obstacle on one side, the shortest path $Q$ does not intersect any of the convex chain in the staircase.
Therefore, the shortest path $Q$ intersects the staircase at either a point in $S_i(p)$, or an orthogonal line segment in the staircase.
Let $p_j$ and $p_k$ be the points in $S_i(p)$ with the minimum and maximum x-coordinates.
For a point $p_1 \in S_i(p)$, suppose the path $Q$ crosses an orthogonal segment $p1p''$  of the staircase at $p_1'$. 
Consider replacing the path from $p$ to $p'_1$ with two line segments: one joining $p$ to $p_1$, and the other from $p_1$ to $p'_1$.
Note that the rectilinear distance along the line joining $p$ to $p'_1$ is same as the rectilinear distance along the altered path.
This new rectilinear path is always guaranteed to exist because of the following reasons: no point of $V_{ortho}$ exists in the region bounded by the staircase and the line segments $pp_j$ and $pp_k$; no convex chain that intersects the coordinate axes intersects with the interior of the altered path.
The path from $p_1$ to $q$ can be altered similarly without changing the length of the path.
Since the distance from $p_1$ to $q$ is shorter than the distance from $p$ to $q$, the process terminates.
Since a shortest path from $p$ to $q$ does not repeat any vertex, the alteration procedure will terminate.
Further, since for $p$ and every $p_l \in S_i(p)$, the edge $pp_l \in E_{vistmp}$, the altered path is in $G_{\mathrm{vistmp}}$.
Therefore, the rectilinear shortest path $Q$ between $p$ and $q$ in the given splinegonal region can be found by determining the shortest path from $p$ to $q$ in the graph $G_{\mathrm{vistmp}}$.

Case (b): The shortest path $Q$ does not intersect any staircase structure defined with respect to $p$. 
Let $pq'$ be the first segment of $Q$ and $pq'$ lie in the first quadrant of ${\cal O}(p)$. 
Let $p_j$ and $p_k$ be the points in $S_1(p)$ with minimum and maximum x-coordinates respectively. 
Since $Q$ does not cross the staircase, it must be the case that that the x-coordinate of $q'$ is either less than $p_j$ or greater than $p_k$.
Consider the first case, if no convex chain intersects the $y$-axis in the first quadrant, then either $q' \in S_1(p)$ or the interior of $pq'$ does not intersect with the first quadrant of ${\cal O}(p)$, leading to a contradiction.
Let $CCY$ be the first convex chain that intersects the upward vertical ray from $p$.
Also let $CC'$ be the section of $CCY$ in the first quadrant of ${\cal O}(p)$ and let $p'$ be the vertical projection of $p$ onto $CC'$.
The slope of $CC'$ at $p'$ will either be positive or negative.
Let $r$ be either the endpoint of $CC'$ or the point where $CC'$ changes the sign of slope.
Therefore $r \in v_{ortho}$ and the $x$-coordinate of $r$ is less than $p_j$.
Let $CC'$ start with negative slope at $p'$.
Then $r$ is visible to $p$ and belongs to $S_1(p)$.
Thus contradicting the choice of $p_j$. 
Therefore $CC'$ starts with a positive slope at $p'$ and $r$ does not belong to $S_1(p)$.
Let $r''$ be the point with smallest $x$-coordinate such that $r'' \in V_{ortho}$ but $r''$ is not visible from $p$ (note that $r$ could be equal to $r''$).
(Refer Fig.~\ref{fig:caseb}.) 
Let $r'$ be the horizontal projection of $r''$ onto $CC'$.
Replace the path from $p$ to $q$ with a path of equal distance so that the latter consists of (i) a vertical line segment $pp'$ (ii) a path from $p'$ to $r'$ along $CC'$, and (iii) a path from $r'$ to $q$.
The rectilinear distance along the $Q$ is same as that of the altered path. 
Since the slopes of points on $CC'$ along the path from $p'$ to $r'$ cannot be negative, the new path is guaranteed to exist. 
The path from $r'$ onwards can similarly be modified.
\end{proof}

\end{document}